\newtheorem{theorem}{Theorem}
\DeclareMathAlphabet{\mathcalligra}{T1}{calligra}{m}{n}
\begin{document}
\title{A Unified Gas-kinetic Scheme for Micro Flow Simulation Based on Linearized Kinetic Equation}
\author[ad1]{Chang Liu}
\ead{cliuaa@connect.ust.hk}
\author[ad1,ad2]{Kun Xu\corref{cor1}}
\ead{makxu@ust.hk}
\address[ad1]{Department of Mathematics, Hong Kong University of Science and Technology, Hong Kong}
\address[ad2]{Department of Mechanical and Aerospace Engineering, Hong Kong University of Science and Technology, Hong Kong, China}
\cortext[cor1]{Corresponding author}

\begin{abstract}
The flow regime of micro flow varies from collisionless regime to hydrodynamic regime according to the Knudsen number.
On the kinetic scale, the dynamics of micro flow can be described by the linearized kinetic equation.
In the continuum regime, hydrodynamic equations such as linearized Navier-Stokes equations and Euler equations can be derived from the linearized kinetic equation by the Chapman-Enskog asymptotic analysis.
In this paper, based on the linearized kinetic equation we are going to propose a unified gas kinetic scheme scheme (UGKS) for micro flow simulation, which is an effective multiscale scheme in the whole micro flow regime.
The important methodology of UGKS is the following. Firstly, the evolution of microscopic distribution function is coupled with the evolution of macroscopic flow quantities. Secondly, the numerical flux of UGKS is constructed based on the integral solution of kinetic equation, which provides a genuinely multiscale and multidimensional numerical flux. The UGKS recovers the linear kinetic solution in the rarefied regime, and converges to the linear hydrodynamic solution in the continuum regime. An outstanding feature of UGKS is its capability of capturing the accurate viscous solution even when the cell size is much larger than the kinetic kinetic length scale, such as the capturing  of the viscous boundary layer with a cell size ten times larger than the particle mean free path. Such a multiscale property is called unified preserving (UP) which has been studied in \cite{guo2019unified}. In this paper, we are also going to give a mathematical proof for the UP property of UGKS.
\end{abstract}

\begin{keyword}
micro flow, Unified Gas-kinetic Scheme, Unified Preserving Property
\end{keyword}
\maketitle

\section{Introduction}
The Boltzmann equation is a fundamental equation in kinetic theory, and it is widely applied in the fields of aerospace engineering, chemical industry, as well as the microelectromechanical systems (MEMS). The modeling scale of the Boltzmann equation is on the kinetic scale, namely the
particle mean free path and collision time scale. Such small modeling scale makes the Boltzmann equation reliable but on the other hand quite complicated, comparing to the hydrodynamic scale Navier-Stokes (NS) and Euler equtions. The complication of the Boltzmann equation comes from its high dimension and stiff collision operator. The asymptotic theories, such as the Hilbert expansion and Chapman-Enskog theory, have been developed that bridge the Boltzmann equation and the hydrodynamic equations, and connect the kinetic parameters to the hydrodynamic ones \cite{chapman1990mathematical,cercignani1969mathematical}.
Similar to the asymptotic theories, the linearized Boltzmann equation has also been studied when dealing with the small perturbed flow field in MEMS and porous media.
For such small perturbed flow field, the linearized Boltzmann equation can faithfully recover the physical solution in a much effective way \cite{cercignani1969mathematical}. As shown in Fig.\ref{diagram}, the asymptotic analysis can also be applied on the linearized kinetic equation, which means the linearized kinetic equation can be approximated by the linearized NS and Euler equations in the hydrodynamic scale. In many applications, the flow regime or the local Knudsen number varies several order of magnitude in a single computation,
and therefore an effective multiscale numerical scheme is highly demanded.
\begin{figure}
\centering
\includegraphics[width=0.9\textwidth]{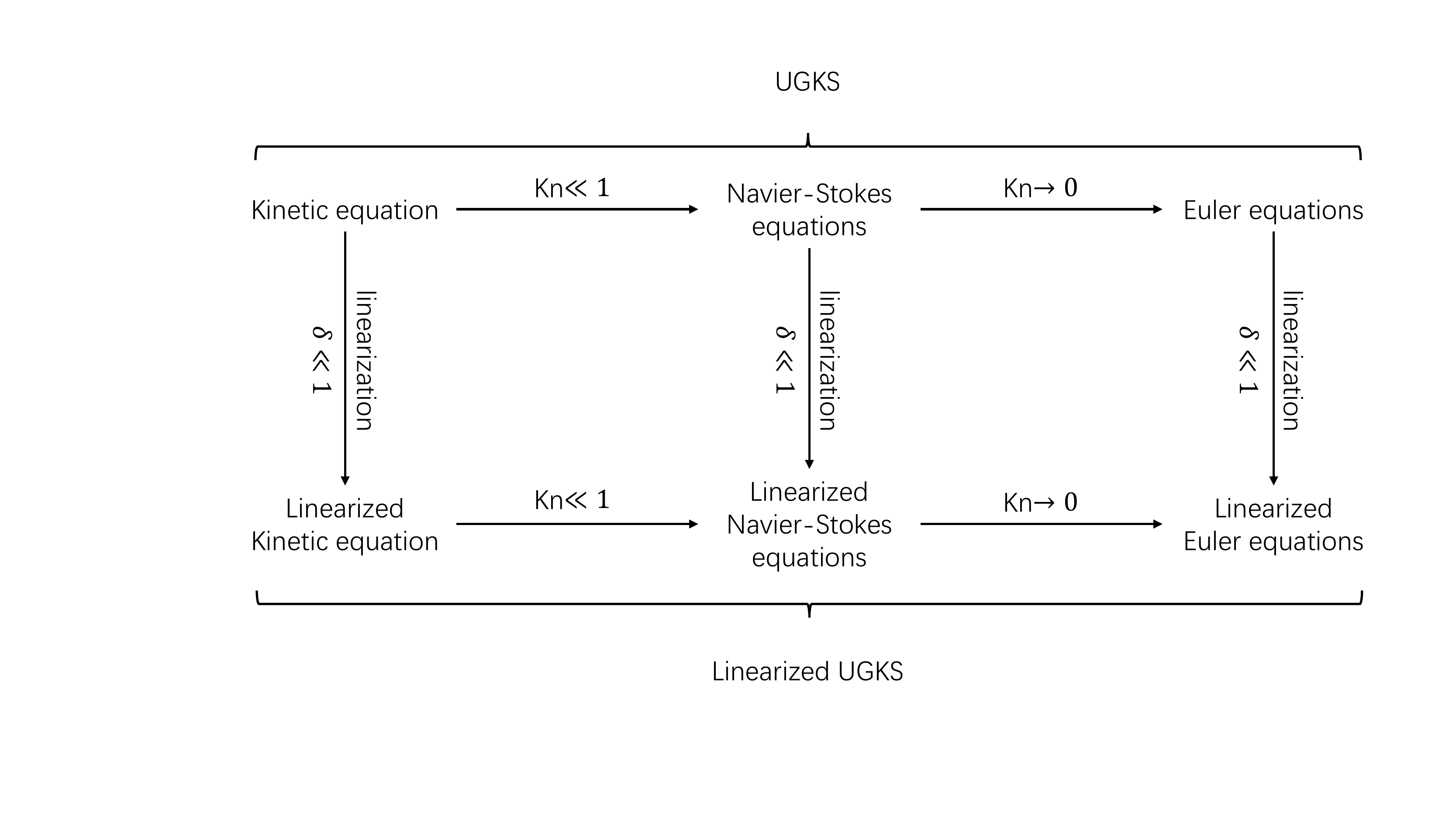}
\caption{A diagram of governing equations according to flow regime, and corresponding multiscale numerical scheme.}
\label{diagram}
\end{figure}

For the last several decades, researchers has been trying to develop effective multiscale numerical schemes \cite{ugks2010,guo2013discrete,su2020can,yuan2019multi,jenny2010solution,fei2020unified}.
In 2010, Xu et al. proposes the unified gas-kinetic scheme, which is the first genuine multiscale scheme being able to capture the viscous effect with cell size much larger than the kinetic scale. The direct modeling methodology of UGKS are: firstly, the evolution of microscopic distribution function is coupled with the evolution of macroscopic flow quantities; secondly, the numerical flux of UGKS is constructed based on the evolution  solution of kinetic equation, which provides a genuinely multiscale numerical flux \cite{xu-book}. The UGKS has been successfully applied in radiative transfer \cite{sun2015asymptotic,sun2017,sun2018,li2020unified}, plasma transport \cite{liu2017}, and multiphase flow \cite{liu2019unified}, etc. In order to reduce the computational cost, the unified gas-kinetic wave-particle (UGKWP) method, i.e. a stochastic version of UGKS has been proposed by Liu et al. \cite{liu2020unified}. The UGKWP method is a multiscale and asymptotic complexity diminishing scheme \cite{crestetto2019asymptotically}, which is effective in the simulation of three dimensional hypersonic flow problems in all regimes. The discrete unified gas-kinetic scheme (DUGKS) is developed by Guo el al. is also a multiscale scheme \cite{guo2013discrete,guo2015discrete}, and has been successfully applied in the field of micro flow \cite{zhu2019application,wang2018oscillatory}, gas mixture \cite{zhang2018discrete}, gas-particle multiphase flow \cite{tao2018combined}, phonon transport \cite{zhang2017unified}, radiation \cite{luo2018multiscale}, etc. The general synthetic iteration scheme was first proposed by Wu et al. for the steady state solution of the linearized kinetic eqaution \cite{su2020can}, and is recently extended to the simulation of nonlinear kinetic equation and diatomic gas \cite{su2020fast,zhu2020general}.

In order to measure the capability of numerical schemes in capturing the multiscale flow physics, Guo et al. proposes the concept of unified preserving property (UP) \cite{guo2019unified}. The unified preserving property states that if a scheme is able to preserve the $n$-th order Chapman-Enskog expansion in continuum regime, with $\Delta t<\text{Kn}^\alpha$, then the scheme is an $n$-th order UP scheme. Therefore the
UP capability of a scheme can be measured by two parameters $n$ and $\alpha$. A higher $n$ and lower $\alpha$ indicates a better multiscale scheme.

In this paper, we extend UGKS to the micro flow simulation and propose a linearized version of UGKS.
The advantage of the linearized UGKS compared to the nonlinear UGKS is that it is much more efficient and accurate in the micro flow simulation. However, the drawback is that the application of linearized version of UGKS is limited to the small perturbed flow problems.
The rest of this paper is organised as following. In Section 2, we are going to specify the linearized kinetic equation and propose the unified gas-kinetic scheme for linearized system. In Section 3, we are going to analyze the unified preserving property of UGKS. The numerical tests are shown in Section 4. And Section 5 will be the conclusion.

\section{Unified Gas-kinetic Scheme for Linearized Kinetic Equation}
\subsection{Linearized kinetic equation}
In this work, the kinetic BGK equation is considered \cite{BGK1954},
\begin{equation}\label{BGK}
\frac{\partial f}{\partial t}+ \vec{v}\cdot\nabla_{\vec{x}} f = \frac{g-f}{\tau},
\end{equation}
where $f(\vec{x},t,\vec{v})$ is the velocity distribution function of gas particle, $\tau$ is the local relaxation parameter which is determined by $\tau=\mu/p$
with the gas pressure $p$ and dynamic viscosity $\mu$.
The local equilibrium Maxwellian distribution $g(\vec{x},t,\vec{v})$ has the form
\begin{equation}\label{maxwellian}
  g(\vec{x},t,\vec{v})=\rho\left(\frac{m}{2\pi k_BT}\right)^{-\frac{3}{2}} \exp\left(-\frac{m(\vec{v}-\vec{U})^2}{2k_BT}\right),
\end{equation}
with density $\rho$, velocity $\vec{U}$, temperature $T$, Boltzmann constant $k_B$, molecular mass $m$.
For the study of micro flow, assume that the unperturbed velocity $\vec{U}_0=0$, and the following dimensionless is used,
\begin{equation}
\tilde{\rho}=\frac{\rho}{\rho_0},
\quad \tilde{T}=\frac{T}{T_0},
\quad \tilde{\vec{U}}=\frac{\vec{U}}{|\vec{U}_{ref}|},
\quad \tilde{f}=\frac{\vec{U}_{ref}^3}{\rho_{0}}f.
\end{equation}
where the reference density and temperature are the unperturbed density and temperature, the reference velocity is the most probable speed $|\vec{U}_{ref}|=\sqrt{2k_BT_0/m}$. The distribution function $\tilde{f}(\vec{x},t,\tilde{\vec{v}})$ can be linearized with respect to the small perturbation $\delta$ \cite{sharipov2012rarefied},
\begin{equation}\label{f-linear}
  \tilde{f}(\vec{x},t,\tilde{\vec{v}})=\frac{1}{\pi^{3/2}} e^{-\tilde{\vec{v}}^2}[1+\tilde{f}_\delta(\vec{x},t,\tilde{\vec{v}})\delta],
\end{equation}
where the small perturbation $\delta$ can be a small pressure gradient, small temperature difference, small external force, etc.
The linearized moments can be obtained by
taking moments to above distribution function,
\begin{equation}\label{w-linear}
\begin{aligned}
  \tilde{\rho}(\vec{x},t)&=1+\tilde{\rho}_{\delta}(\vec{x},t)\delta,\\
  \tilde{\vec{U}}(\vec{x},t)&=\tilde{\vec{U}}_\delta(\vec{x},t)\delta,\\
  \tilde{T}(\vec{x},t)&=1+\tilde{T}_\delta(\vec{x},t)\delta,
\end{aligned}
\end{equation}
where
\begin{equation}
  \left(\begin{array}{c} \tilde{\rho}_{\delta}\\ \tilde{\vec{U}}_\delta\\\tilde{T}_\delta \\
  \end{array}\right)=
  \frac{1}{\pi^{3/2}}\int
  \left(\begin{array}{c} 1\\ \tilde{\vec{v}} \\
  \frac{2}{3}\tilde{\vec{v}}^2-1 \\ \end{array}\right)
  \text{e}^{-\tilde{\vec{v}}^2}
  \tilde{f}_\delta(\vec{x},t,\tilde{\vec{v}})d\tilde{\vec{v}}.
\end{equation}
The linearized BGK equation can be written as
\begin{equation}\label{bgk-linear}
  \frac{\partial \tilde{f}_\delta}{\partial t}+
  \tilde{\vec{u}}\cdot\frac{\partial \tilde{f}_\delta}{\partial \vec{x}}=
  \frac{1}{\tau}\left[\tilde{\rho}_\delta+
  2\tilde{\vec{u}}\cdot\tilde{\vec{U}}_\delta+
  \tilde{T}_\delta\left(\tilde{\vec{u}}^2-\frac{3}{2}\right)-\tilde{f}_\delta\right],
\end{equation}
where the linearized equilibrium distribution function $\tilde{g}_\delta $ is
\begin{equation}
\tilde{g}_\delta=\tilde{\rho}_\delta+
  2\tilde{\vec{u}}\cdot\tilde{\vec{U}}_\delta+
  \tilde{T}_\delta\left(\tilde{\vec{u}}^2-\frac{3}{2}\right).
\end{equation}
For the sake of simple notation, the tilde and subscript $\delta$ is omitted in the rest of this paper, and no confusion will be caused.

Analogy to the Chapman-Enskog theory, the linearized hydrodynamic equations can be derived from the above linearized kinetic equation in continuum regime \cite{chapman1990mathematical,guo2019unified}.
We perform asymptotic analysis to Eq.\eqref{bgk-linear} with respect to $\tau$, and the distribution function can be expanded as
\begin{equation}\label{expansion-f}
  f=f^{(0)}+\tau f^{(1)}+\tau^2f^{(2)}+...,
\end{equation}
and correspondingly the time derivative is also expanded as
\begin{equation}\label{expansion-t}
  \partial_t=\partial_{t_{0}}+\tau \partial_{t_{1}}+\tau^2 \partial_{t_{2}}+...,
\end{equation}
where $\partial_{t_{k}}$ stands for the contribution to $\partial_{t}$ from the spatial gradients of the $(k+1)-$th order hydrodynamic variables.
Substitute the expansion of the distribution function \eqref{expansion-f} and time derivative \eqref{expansion-t} into the linearized kinetic equation \eqref{bgk-linear}, and the following hierarchy and be obtained
\begin{align}
    &\epsilon^{-1}: \quad f^{(0)}=f^{eq},\label{hierarchy1}\\
    &\epsilon^0: \quad D_0f^{(0)}=-f^{(1)},\label{hierarchy2}\\
    &\epsilon^1: \quad \partial_{t_1}f^{(0)}+D_0f^{(1)}=-f^{(2)},\label{hierarchy3}\\
    &...\\
    &\epsilon^{k-1}: \quad \sum_{j=1}^{k-1}f^{(k-j-1)}+D_0f^{(k-1)}=f^{(k)},\label{hierarchy4}
\end{align}
where $D_0=\partial_{t_0}+\vec{v}\cdot\nabla$. The conservative constraint of collision operator gives
\begin{equation}
\int \psi f^{(k)} d\vec{v} =0,
\end{equation}
where $\psi=(1,\vec{v},\vec{v}^2/2)$ are the collision invariants.
Consider the two dimensional case, the second order hierarchy gives the linearized Euler equations
\begin{equation}\label{Euler}
  \frac{\partial}{\partial t}
  \left(\begin{array}{c}\rho\\U\\V\\\frac34(\rho+T)\\\end{array}\right)+
  \frac{\partial}{\partial x}
  \left(\begin{array}{c}U\\\frac12(\rho+T)\\0\\\frac54 U\\\end{array}\right)+
  \frac{\partial}{\partial y}
  \left(\begin{array}{c}V\\0\\\frac12(\rho+T)\\\frac54 V\\\end{array}\right)=0,
\end{equation}
where $U$, $V$ are the macroscopic x-directional velocity and y-directional velocity.
The second and third order hierarchies give the linearized Navier-Stokes equations
\begin{equation}\label{NS}
\begin{aligned}
  &\frac{\partial}{\partial t}
  \left(\begin{array}{c}\rho\\U\\V\\\frac34(\rho+T)\\\end{array}\right)+
  \frac{\partial}{\partial x}
  \left(\begin{array}{c}U\\\frac12(\rho+T)\\0\\\frac54 U\\\end{array}\right)+
  \frac{\partial}{\partial y}
  \left(\begin{array}{c}V\\0\\\frac12(\rho+T)\\\frac54 V\\\end{array}\right)\\
  &=
  \frac{\partial}{\partial x}
  \left(\begin{array}{c}0\\\frac23 \tau U_x-\frac13 \tau V_y\\\frac12\tau(V_x+U_y)\\\frac58 \tau T_x\\\end{array}\right)+
  \frac{\partial}{\partial y}
  \left(\begin{array}{c}0\\\frac12 \tau(U_y+V_x)\\\frac23\tau V_y-\frac13\tau U_x\\\frac58\tau T_y\\\end{array}\right),
\end{aligned}
\end{equation}
where the viscous coefficient $\mu=\frac{\tau}{2}$, heat conduction coefficient $\kappa=\frac{5\tau}{8}$, and the Prandtl number is $\frac{c_p\mu}{\kappa}=1$.
\subsection{Unified Gas-kinetic Scheme}
Consider two dimensional flow, the following reduced distribution functions are introduced to reduce the computational cost,
\begin{equation}
\begin{aligned}
  h(\vec{x},t,u,v)&=\frac{1}{\sqrt{\pi}}\int \text{e}^{-w^2}f(\vec{x},t,u,v,w) dw,\\
  b(\vec{x},t,u,v)&=\frac{1}{\sqrt{\pi}}\int \text{e}^{-w^2}\left(w^2-\frac12\right)f(\vec{x},t,u,v,w) dw.
\end{aligned}
\end{equation}
The moments of the reduced distribution function are
\begin{equation}
  \left(\begin{array}{c} \rho\\ \vec{U}\\ T \\
  \end{array}\right)=
  \frac{1}{\pi}\int
  \left(\begin{array}{c} h\\ \vec{v} h\\
  \frac23(u^2+v^2-1)h+\frac23b\\ \end{array}\right)
  \text{e}^{-(u^2+v^2)} dudv.
\end{equation}
The reduced distribution functions follow the kinetic equations
\begin{equation}\label{hb-linear}
\begin{aligned}
  \frac{\partial h}{\partial t}+
  u\frac{\partial h}{\partial x}+
  v\frac{\partial h}{\partial y}&=
  \frac{1}{\tau}\left[\rho+
  2\vec{v}\cdot\vec{U}+
  T(\vec{v}^2-1)-h\right],\\
  \frac{\partial b}{\partial t}+
  u\frac{\partial b}{\partial x}+
  v\frac{\partial b}{\partial y}&=
  \frac{1}{\tau}\left(\frac12T-b\right),
\end{aligned}
\end{equation}
where $g_h=\rho+2\vec{v}\cdot\vec{U}+T(\vec{v}^2-1)$ and $g_b=\frac12T$ are the reduced equilibrium distribution function.
The finite volume evolution equation of UGKS is obtained by integrating Eq.\eqref{hb-linear} with respect to space and time. Consider space control volume $\Omega_i$ and velocity control volume $\Omega_j$, and the cell averaged quantities are defined as
\begin{equation}
\begin{aligned}
  &h^{n}_{ij}=\frac{1}{|\Omega_{ij}|}\int h(\vec{x},t^n,\vec{v}) d\vec{x} d\vec{v},\\
  &b^{n}_{ij}=\frac{1}{|\Omega_{ij}|}\int b(\vec{x},t^n,\vec{v}) d\vec{x} d\vec{v},\\
  &\vec{W}^{n}_{ij}=\frac{1}{|\Omega_{i}|}\int \vec{W}(\vec{x},t^n) d\vec{x},
\end{aligned}
\end{equation}
where $\Omega_{ij}=\Omega_i\otimes\Omega_j$ is the control volume in the phase space.
The UGKS evolution equation of the distribution function is
\begin{equation}\label{update-f}
\begin{aligned}
h^{n+1}_{ij}=&h^{n}_{ij}-\frac{1}{|\Omega_i|}\int_{\partial \Omega_i} F^h_j ds
+\frac{\Delta t}{2}\left(\frac{g_{h,ij}^n-h^{n}_{ij}}{\tau^n}+\frac{g_{h,ij}^{n+1}-h^{n+1}_{ij}}{\tau^{n+1}}\right),\\
b^{n+1}_{ij}=&b^{n}_{ij}-\frac{1}{|\Omega_i|}\int_{\partial \Omega_i} F^b_j ds
+\frac{\Delta t}{2}\left(\frac{g_{b,ij}^n-b^{n}_{ij}}{\tau^n}+\frac{g_{h,ij}^{n+1}-b^{n+1}_{ij}}{\tau^{n+1}}\right),
\end{aligned}
\end{equation}
which is coupled with the evolution equation of the macroscopic conservative variables.
\begin{equation}\label{update-w}
  \vec{W}^{n+1}_i=\vec{W}^{n}_i-\frac{1}{|\Omega_i|}\int_{\partial \Omega_i} F^W_j ds.
\end{equation}
Assume that $t^n=0$, the center of cell interface $\vec{x}_{\partial\Omega_j}=0$, the projection of velocity on the outer normal vector $\vec{n}_{\partial\Omega_j}$ is $u$, and the UGKS multiscale numerical flux is
\begin{equation}\label{flux3}
\begin{aligned}
&F^h_j=\int_0^{\Delta t}ue^{-w^2} f(0,t,\vec{v}) dw dt,\\
&F^b_j=\int_0^{\Delta t}ue^{-w^2} (w^2-\frac12)f(0,t,\vec{v}) dw dt,\\
&F^W=\int_0^{\Delta t}u\vec{\psi} f(0,t,\vec{v}) d\vec{v} dt,
\end{aligned}
\end{equation}
where $\vec{\psi}=\left(1,\vec{v},\frac12\vec{v}^2\right)$ is the conservative moments.
The key ingredient of UGKS is the use of integral solution $f(0,t,\vec{v})$ in the construction of numerical flux,
\begin{equation}\label{integralsolution}
\begin{aligned}
  f(0,t,\vec{v})=&\frac{1}{\tau}\int_0^t g(\vec{x}',t',\vec{v}) e^{-\frac{t'-t}{\tau}} dt'+e^{-t/\tau} f_0(-\vec{v} t,\vec{v})\\
  =&(1-e^{-t/\tau})g_0
  +(\tau(e^{-t/\tau}-1)+te^{-t/\tau})(u\partial_xg_0+v\partial_yg_0)\\
  &+\tau(t/\tau-1+e^{-t/\tau})\partial_tg_0+e^{-t/\tau}f_0
  -te^{-t/\tau}(u\partial_xf_0 +v\partial_yf_0),
\end{aligned}
\end{equation}
where $f_0(\vec{x},\vec{v})$ is the initial distribution at $t=0$, and
\begin{equation}
\begin{aligned}
  &f_0=f_0^l H[u]+f_0^r (1-H[u]),\\
  & \int \psi g_0 d\vec{v} = \int \psi f_0 d\vec{v}, \\
  &\partial_x g_0=\partial_x^l g_0 H[u]+\partial_x^r g_0 (1-H[u]),\\
  &\partial_x f_0=\partial_x^l f_0 H[u]+\partial_x^r f_0 (1-H[u]),
\end{aligned}
\end{equation}
where $H[x]$ is the Heaviside function, and the least squares method is used for spatial reconstruction.
The time derivative is approximated by the first order Chapman-Enskog expansion,
\begin{equation}
W_t=-\int (u\partial_x g+v\partial_y g) d\vec{v}.
\end{equation}
Substitute the integral solution Eq.\eqref{integralsolution} into the numerical flux Eq.\eqref{flux3}, and we have
\begin{equation}\label{fluxf}
\begin{aligned}
  F^h_j=&c_1u_j(\rho_0 +2\vec{v}\cdot\vec{U}_0+T_0(\vec{v}^2-1))\\
  &+c_2u_j^2(\rho_x^l +2\vec{v}\cdot\vec{U}_x^l+T_x^l(\vec{v}^2-1))H[u_j]\\
  &+c_2u_j^2(\rho_x^r +2\vec{v}\cdot\vec{U}_x^r+T_x^r(\vec{v}^2-1))(1-H[u_j])\\
  &+c_2u_jv_j(\rho_y +2\vec{v}\cdot\vec{U}_y+T_y(\vec{v}^2-1))\\
  &+c_3u_j(\rho_t +2\vec{v}\cdot\vec{U}_t+T_t(\vec{v}^2-1))\\
  &+c_4u_jh_0+c5(u_j^2h_{x,j}+u_jv_jh_{y,j}),\\
  F^b_j=&c_1u_jT_0/2+c_2u_j^2T_x^l/2H[u_j]+c_2u_j^2T_x^r(1-H[u_j])\\
  &+c_2v_ju_jT_y+c_3u_jT_t+c_4u_jb_0+c5(u_j^2b_{x,j}+u_jv_jb_{y,j}),
\end{aligned}
\end{equation}
and
\begin{equation}\label{fluxw}
\begin{aligned}
F^W=&c_1 \int u\vec{\psi} g d\vec{v}
+c_2 \int u^2\vec{\psi} \big[g_x^l H[u]+g_x^r (1-H[u])\big] d\vec{v}
+c_2 \int uv\vec{\psi} g_y d\vec{v}
+c_3 \int u\vec{\psi} g_t d\vec{v}\\
&+c_4\left(
       \begin{array}{c}
         \sum \omega_k u_k h_k \\
         \sum \omega_k u_k^2 h_k \\
         \sum \omega_k u_kv_k h_k \\
        \frac12 \sum \omega_k (u_k^3 h_k+u_k b_k) \\
       \end{array}
     \right)
     +c_5\left(
       \begin{array}{c}
         \sum \omega_k (u_k^2 h_{x,k}+u_kv_kh_{y,k}) \\
         \sum \omega_k (u_k^3 h_{x,k}+u_k^2v_k h_{y,k}) \\
         \sum \omega_k (u_k^2v_k h_{x,k}+u_kv_k^2 h_{y,k}  )\\
        \frac12 \sum \omega_k (u_k^4 h_{x,k}+u_k^2 b_{x,k}+u_k^3v_k h_{y,k}+u_kv_k b_{y,k}) \\
       \end{array}
     \right)
\end{aligned}
\end{equation}
where the time integration coefficients are
 \begin{equation}\label{coefficients}
   \begin{aligned}
     &c_1=1-\frac{\tau}{\Delta t}\left(1-e^{-\Delta t/\tau}\right),\\
     &c_2=-\tau+\frac{2\tau^2}{\Delta t}-e^{-\Delta t/\tau}\left(\frac{2\tau^2}{\Delta t}+\tau\right),\\
     &c_3=\frac12\Delta t-\tau+\frac{\tau^2}{\Delta t}\left(1-e^{-\Delta t/\tau}\right)\\
     &c_4=\frac{\tau}{\Delta t}\left(1-e^{-\Delta t/\tau}\right),\\
     &c_5=\tau e^{-\Delta t/\tau}-\frac{\tau^2}{\Delta t}\left(1-e^{-\Delta t/\tau}\right).
   \end{aligned}
 \end{equation}
The evolution equations Eq.\eqref{update-f},\eqref{update-w} and multiscale flux \eqref{fluxf},\eqref{fluxw} close the UGKS formulation. In the next section we are going to analyse the unified preserving property of UGKS.

\section{Unified Preserving Property of UGKS}
The concept of unified preserving property was proposed by Guo et al. to measure capability of the schemes in capturing the multiscale flow physics \cite{guo2019unified}. If a scheme is able to preserve the $n$-th order Chapman-Enskog expansion in continuum regime, with $\Delta t<\text{Kn}^\alpha$, then the scheme is an $n$-th order UP scheme. In this section, we are going to prove that the UGKS is a second order UP scheme.

\begin{theorem}\label{uptheorem}
  For $\Delta t< O(\epsilon^{1/2})$ and $\Delta x< O(\epsilon^{1/2})$, the Chapman-Enskog expansion coefficients of $f$ obtained from UGKS satisfy
  \begin{equation}
  f^{(0)}=f^{(eq)}, \quad \sum_{j=1}^{k-1}f^{(k-j-1)}+D_0f^{(k-1)}=f^{(k)} (1\leq k\le n),
  \end{equation}
  for $n=2$, if the relaxation time $\tau$ is a constant.
\end{theorem}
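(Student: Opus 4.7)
The plan is to carry out an asymptotic analysis of the UGKS update \eqref{update-f} together with its multiscale flux \eqref{fluxf}--\eqref{fluxw}, using $\tau=O(\epsilon)$ as the small parameter and treating $\Delta t,\Delta x=O(\epsilon^{1/2})$ as prescribed. Under this scaling $\Delta t/\tau=O(\epsilon^{-1/2})\to\infty$, so $e^{-\Delta t/\tau}$ is exponentially small and may be discarded modulo any algebraic order. The first step is therefore to Taylor-expand the coefficients \eqref{coefficients} in powers of $\tau/\Delta t$, obtaining
\begin{equation*}
c_1\sim 1-\tfrac{\tau}{\Delta t},\quad c_2\sim-\tau+\tfrac{2\tau^2}{\Delta t},\quad c_3\sim\tfrac{1}{2}\Delta t-\tau+\tfrac{\tau^2}{\Delta t},\quad c_4\sim\tfrac{\tau}{\Delta t},\quad c_5\sim-\tfrac{\tau^2}{\Delta t},
\end{equation*}
so that the flux decomposes into a dominant equilibrium piece (carried by $c_1,c_2,c_3$) and small kinetic corrections (carried by $c_4,c_5$). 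The identity $c_1+c_4=1$ will be useful for recombining the contributions from $g_0$ and $f_0$ at the interface.

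Next I would insert the formal Chapman--Enskog ansatz $f=f^{(0)}+\tau f^{(1)}+\tau^2 f^{(2)}+\cdots$ both in the cell averages and in the reconstructed interface states that feed the flux, Taylor-expand the spatial and temporal reconstructions, divide the update equation by $\Delta t$, and collect terms by powers of $\epsilon$. At $O(\epsilon^{-1})$, only the trapezoidal collision source survives and forces $f^{(0)}=f^{eq}$, giving \eqref{hierarchy1}. At $O(\epsilon^{0})$, the $c_1$ streaming term together with the collision balance produces the relation $f^{(1)}=-D_0f^{(0)}$, which is \eqref{hierarchy2}. At $O(\epsilon^{1})$, the combination of the $c_2$ spatial-derivative terms, the $c_3$ time-derivative term (whose closure via $W_t=-\int(u\partial_x g+v\partial_y g)d\vec{v}$ supplies the split $\partial_t=\partial_{t_0}+\tau\partial_{t_1}+\cdots$), and the half-step contributions from the trapezoidal collision discretization must be shown to reproduce $f^{(2)}=-\partial_{t_1}f^{(0)}-D_0f^{(1)}$, which is \eqref{hierarchy3}; this is where the coupled macroscopic update \eqref{update-w} together with the integrals of the conservative flux \eqref{fluxw} is needed to guarantee that the recovered moment equations coincide with the linearized Navier--Stokes system \eqref{NS}.

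The assumption of constant $\tau$ enters in two places: it removes gradients of $\tau$ from the differentiation of $c_1,\ldots,c_5$ and from the trapezoidal collision source, and it legitimates pulling $1/\tau$ outside the time integrals in the integral solution \eqref{integralsolution}. Without it, additional terms of order $\nabla\tau$ would appear at $O(\epsilon^1)$ and would have to be absorbed separately; restricting to constant $\tau$ therefore isolates the purely numerical, rather than modeling, contributions to the UP order.

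The hard part will be the bookkeeping of truncation errors arising from the piecewise reconstruction. The left/right reconstructed slopes $\rho_x^{l,r}$, $\vec{U}_x^{l,r}$, $T_x^{l,r}$ that enter \eqref{fluxf} are only $O(\Delta x)$ accurate and are multiplied by $c_2=O(\tau)$, giving nominal contributions of order $\tau\Delta x/\Delta t=O(\epsilon)$; cancellations between the jump part $f_0^l H[u]+f_0^r(1-H[u])$ and the smooth part of $g_0$ must be verified in order to keep these terms from polluting the $f^{(2)}$ relation. A second subtle point is that the conservative flux \eqref{fluxw} mixes continuum-type moments of $g$ and its derivatives with discrete kinetic sums over $h_k,b_k$; these two contributions must be shown to be mutually consistent through the trapezoidal collision operator so that the macroscopic stress and heat flux recovered from the discrete scheme coincide, up to $O(\tau^2)$, with those appearing on the right-hand side of \eqref{NS}. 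Once these cancellations are established, the hierarchy holds up to $k=2$, which is the claimed second-order UP property.
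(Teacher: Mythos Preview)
Your proposal handles only the regime $\Delta t/\tau\to\infty$ (you fix $\Delta t=O(\epsilon^{1/2})$ with $\tau=O(\epsilon)$), which is what justifies discarding $e^{-\Delta t/\tau}$. But the theorem asserts the UP property for \emph{all} $\Delta t<O(\epsilon^{1/2})$, and this includes the sub-range $\Delta t\lesssim\tau$ where $e^{-\Delta t/\tau}=O(1)$ and your coefficient expansions break down. The paper treats this as a genuinely separate case: for $\Delta t=\tau^\alpha$ with $\alpha>1$ it Taylor-expands the $c_i$ in powers of $\Delta t/\tau$ (not $\tau/\Delta t$), obtaining for instance $c_1\sim\tfrac12\,\Delta t/\tau$, $c_4\sim 1-\tfrac12\,\Delta t/\tau$, $c_5\sim-\tfrac12\,\Delta t^2$, and arrives at a different modified equation whose leading error terms are $O(\tau^{2\alpha-1})$ and $O(\tau^{2\alpha})$. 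Without this second regime your argument is incomplete.

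Methodologically you also diverge from the paper. Rather than inserting the Chapman--Enskog ansatz directly into the discrete update and collecting powers of $\epsilon$, the paper first derives the \emph{modified equation} of UGKS---the PDE that the scheme approximates to the relevant order in $\Delta t,\Delta x$---and only then reads off the CE hierarchy from that PDE. In the $\tau<\Delta t<\tau^{1/2}$ regime this yields $\partial_tf+v\partial_x[g-\tau(g_t+vg_x)]-Q=\tfrac{\Delta t}{2}\mathcal L_1+\tfrac{\tau^2}{\Delta t}\mathcal L_2+\cdots$, and the proof reduces to checking that none of the error operators $\mathcal L_i$ contributes at orders $\epsilon^{-1},\epsilon^0,\epsilon^1$. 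This largely sidesteps the reconstruction bookkeeping you flag as ``the hard part'': once the modified equation is written, the left/right interface states and the discrete velocity sums have been absorbed into the truncation-error symbols and need not be tracked individually through the CE substitution. Your direct-substitution route can in principle reach the same conclusion, but the modified-equation route is cleaner and is what the paper actually does.
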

\begin{proof}
  Without loss of generality, consider the one dimensional case.
  The numerical flux of UGKS at cell interface $x_{i-1/2}$ can be written as
  \begin{equation}
  \begin{aligned}
  F_{i-\frac12}=&\frac{1}{\Delta t}\int_{t^n}^{t^{n+1}} vf_{i-\frac12}(t,\vec{v})dt\\
  =&c_1vg_{i-\frac12}+c_2v^2\partial_xg_{i-\frac12}+c_3v\partial_tg_{i-\frac12}+c_4vf_{i-\frac12}+c_5v^2\partial_xf_{i-\frac12}.
  \end{aligned}
  \end{equation}
  For $0\leq\Delta t\leq \tau$, namely $\Delta t=\tau^\alpha$ and $\alpha>1$. The coefficients $c_{1-5}$ can be expanded up to $\left(\frac{\Delta t}{\tau}\right)^2=O(\tau^{2\alpha-2})$ as
  \begin{equation}
  \begin{aligned}
  c_1&=\frac12 \frac{\Delta t}{\tau}+O\left(\tau^{2\alpha-2}\right),\\
  c_2&=0+O\left(\tau^{2\alpha-1}\right),\\
  c_3&=0+O\left(\tau^{2\alpha-1}\right),\\
  c_4&=\Delta t-\frac{\Delta t}{2\tau}+O\left(\tau^{2\alpha-2}\right),\\
  c_5&=\frac12 \Delta t^2+O\left(\tau^{2\alpha-1}\right),
  \end{aligned}
  \end{equation}
  and therefore, we have
  \begin{equation}
  \frac{F_{i+\frac12}-F_{i-\frac12}}{\Delta x}=v\partial_xf+\frac{\Delta t}{2}\big[-v^2\partial_xf+v\partial_xQ\big]
  +O\left(\tau^{2\alpha-1}\right)\mathcal{L}_{F1}(Q)+O(\tau^{2\alpha})\mathcal{L}_{F2}(f,g),
  \end{equation}
  where $\mathcal{L}$ notates for a generalized linearized operator, and $Q=(g-f)/\tau$ is the collision operator.
  We can obtain the modified equation of UGKS as
  \begin{equation}
  \partial_t f+v\partial_xf-Q=\frac{\Delta t}{2}
  \underbrace{\big[\partial_t^2f-v^2\partial_x^2f
  -\partial_tQ+u\partial_xQ\big]}_{\mathcal{A}}
  +O\left(\tau^{2\alpha-1}\right)\mathcal{L}_{F1}(Q)+O(\tau^{2\alpha})\mathcal{L}_{F2}(f,g),
  \end{equation}
  The underbraced term $\mathcal{A}$ can be estimated as
  \begin{equation}
\begin{aligned}
  \mathcal{A}=&(\partial_t-v\partial_x)\big[\partial_tf+v\partial_xf-Q\big]\\
  =&-\frac{\Delta t}{2}(\partial_t-v\partial_x)\mathcal{A}+O\left(\tau^{2\alpha-1}\right)\mathcal{L}_{F1}(Q)+O(\tau^{2\alpha})\mathcal{L}_{F2}(f,g),
\end{aligned}
  \end{equation}
  which gives
  \begin{equation}
  A=O\left(\tau^{2\alpha-1}\right)\mathcal{L}_{F1}(Q)+O(\tau^{2\alpha})\mathcal{L}_{F2}(f,g),
  \end{equation}
  and UGKS modified equation can be written as
   \begin{equation}\label{modified1}
  \partial_t f+v\partial_xf-Q=
  O\left(\tau^{2\alpha-1}\right)\mathcal{L}_{1}(Q)+O(\tau^{2\alpha})\mathcal{L}_{2}(f,g).
  \end{equation}
  As shown in figure \ref{order1}, only $O(\tau^{-1})$ in $\mathcal{L}_1$ is included in the first three orders of expansion. And the Chapman-Enskog hierarchy can be obtained as following
\begin{equation}
  \begin{aligned}
    &\epsilon^{-1}: \quad f^{(0)}=f^{eq},\\
    &\epsilon^0: \quad D_0f^{(0)}=-f^{(1)},\\
    &\epsilon^1: \quad \partial_{t_1}f^{(0)}+D_0f^{(1)}=-f^{(2)}.
\end{aligned}
\end{equation}
  \begin{figure}
   \centering
   \includegraphics[width=0.8\textwidth]{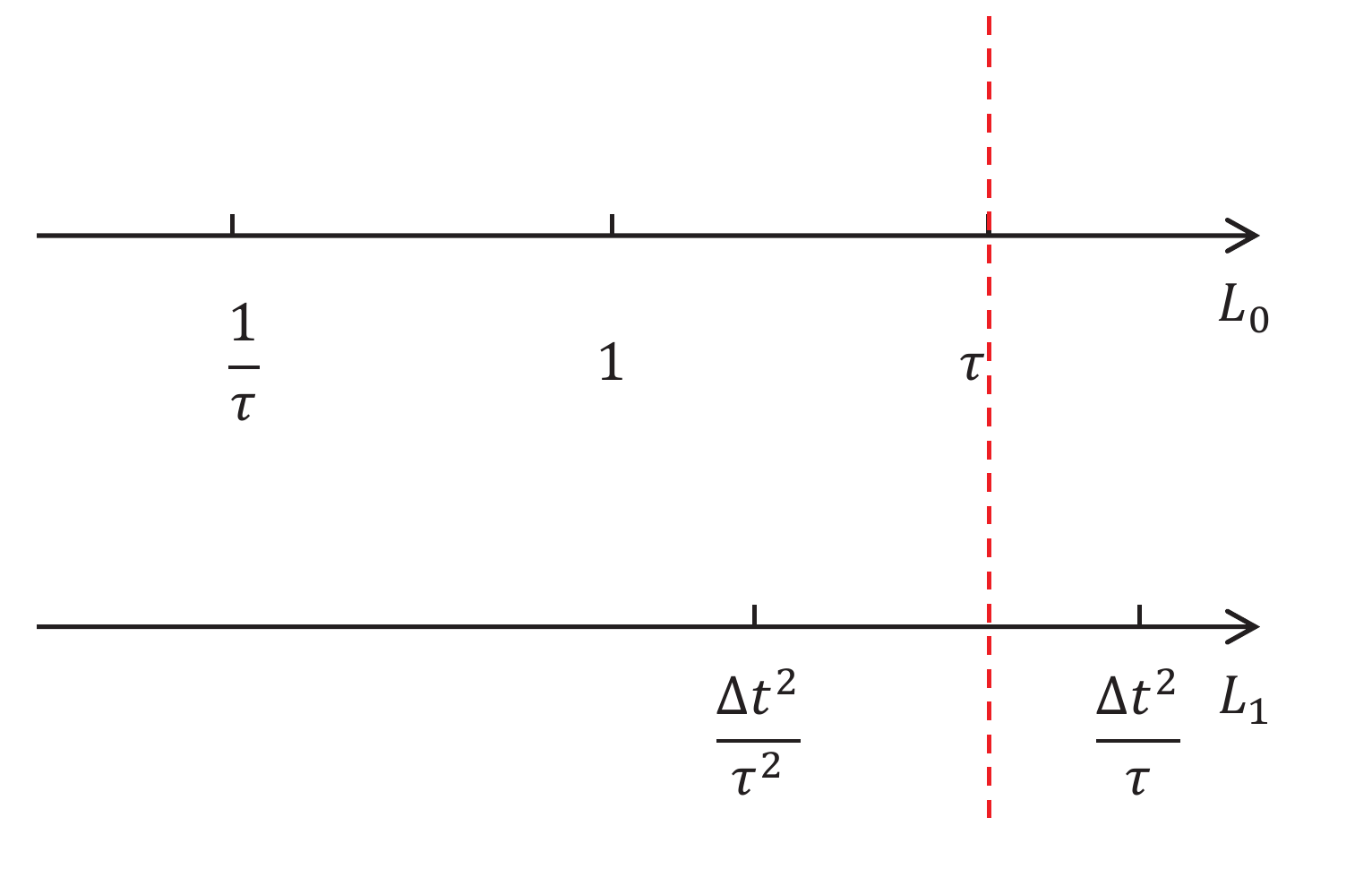}
   \caption{Order sequence of the modified equation \eqref{modified1}.}
   \label{order1}
   \end{figure}
   For $\tau<\Delta t<\tau^{0.5}$, namely $\Delta t=\tau^\beta$ and $0.5<\beta<1$. The coefficients $c_{1-5}$ can be estimated as
   \begin{equation}
  \begin{aligned}
  c_1&=\Delta t-\tau,\\
  c_2&=2\tau^2-\tau\Delta t,\\
  c_3&=\frac12\Delta t^2-\tau\Delta t+\tau^2,\\
  c_4&=\tau,\\
  c_5&=\tau^2,
  \end{aligned}
  \end{equation}
  and we can obtain the modified equation of UGKS as
  \begin{equation}\label{modified2}
\begin{aligned}
  &\partial_tf+v\partial_x\left[g-\tau(g_t+vg_x)\right]-Q\\
  =&\frac{\Delta t}{2}\mathcal{L}_1+\frac{\tau^2}{\Delta t}
  \mathcal{L}_2+O(\Delta t^2)\mathcal{L}_3(Q)+O(\Delta x^3)\mathcal{L}_4(g,f),
\end{aligned}
\end{equation}
where
\begin{equation}
\begin{aligned}
  &\mathcal{L}_1=\partial_t^2f
  +v\partial_t\partial_xg
  +\partial_tQ\\
  &\mathcal{L}_2=2v^2\partial_x^2g+v\partial_t\partial_xg
  -u^2\partial_x^2f+v \partial_xQ
\end{aligned}
\end{equation}

\begin{figure}
   \centering
   \includegraphics[width=0.8\textwidth]{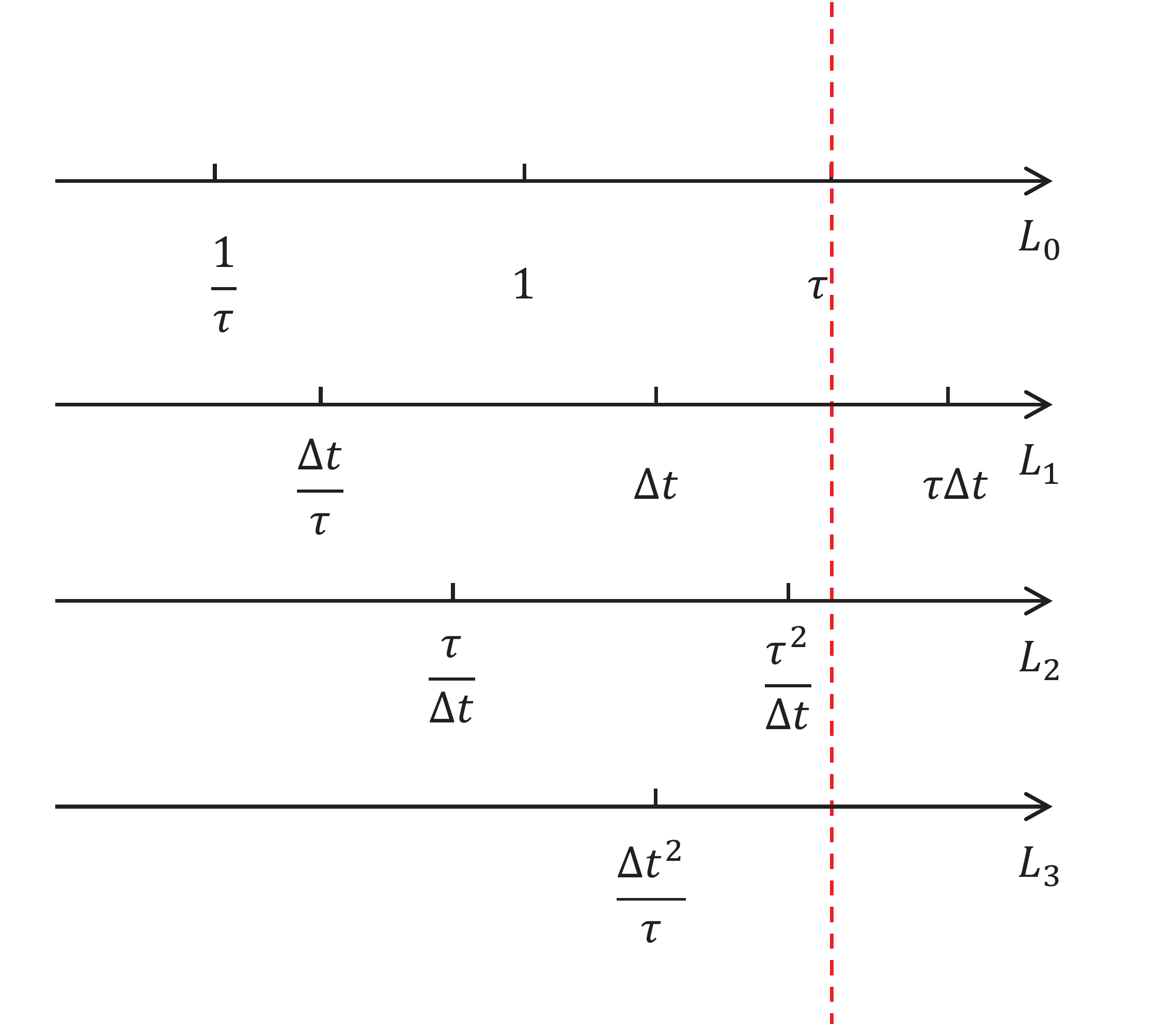}
   \caption{Order sequence of the modified equation \eqref{modified2}.}
   \label{order2}
   \end{figure}
  As shown in Fig.\ref{order2}, only $O(\tau^{-1})$ in $\mathcal{L}_1$ is included in the first three orders of expansion. And the Chapman-Enskog hierarchy can be obtained as following
\begin{equation}
  \begin{aligned}
    &\epsilon^{-1}: \quad f^{(0)}=f^{eq},\\
    &\epsilon^0: \quad D_0f^{(0)}=-f^{(1)},\\
    &\epsilon^1: \quad \partial_{t_1}f^{(0)}+D_0f^{(1)}=-f^{(2)}.
\end{aligned}
\end{equation}
It is shown that for $\Delta t< O(\epsilon^{1/2})$ and $\Delta x< O(\epsilon^{1/2})$, the UGKS exactly preserves the second order Chapman-Enskog expansion. Therefore the UGKS is a second order unified preserving scheme.
\end{proof}

Note that theorm \ref{uptheorem} holds for both linear and nonlinear UGKS. The conventional way to analyze the multiscale property of UGKS is to perform discrete asymptotic analysis to the discrete governing equations of UGKS, and then compare the discrete asymptotic analysis to the continuous Chapman-Enskog solution, which has been done by Liu et al. \cite{liu2016}. As shown in Fig. \ref{path}, the conventional analysis and current UP analysis are equivalent, and both show that the UGKS preserves the NS solution in continuum regime even when the cell size and time step are much larger than the kinetic particle mean free path and collision time.
\begin{figure}
\centering
\includegraphics[width=0.8\textwidth]{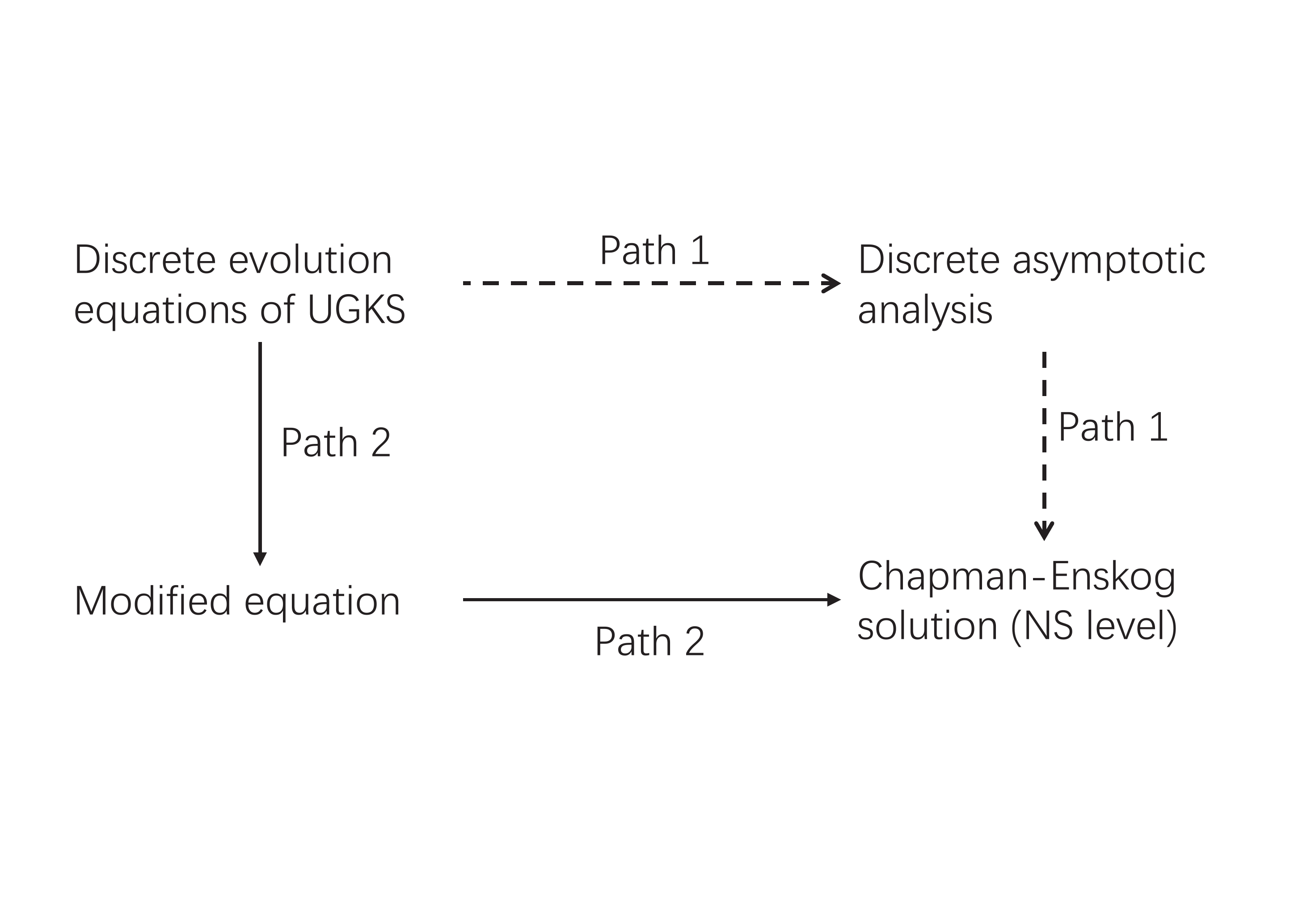}
\caption{Two equivalent ways to analyze the multiscale property of UGKS.}
\label{path}
\end{figure}

\section{Numerical Tests}
We perform five numerical tests to verify the accuracy and multiscale property of UGKS, including a 1D poiseuille flow test and four 2D tests. The Knudsen number of the numerical tests varies from 10 to $10^{-4}$, covering the flow regime from highly rarefied to Navier-Stokes regimes. It can be observed from the comparison that the UGKS well captures the kinetic solution in rarefied regime, and is able to capture NS solution with cell size much larger than the particle mean free path.

\subsection{Poiseuille flow}
The first test case is the one dimensional poiseuille flow.
The argon gas is confined between two isothermal wall located at $x=0$ and $x=1$.
The y-directional external force $\vec{F}=F_\delta \vec{y}$ is small, and all flow quantities are expanded with respect to the small force $F_\delta$. Two cases with different Knudsen numbers are calculated. For the first test case, the Knudsen number is 1.0, the physical space is divided into 40 equally distributed cells and the numerical space $[-5,5]$ is divided into 32 equally distributed velocity points. The solution of UGKS is compared to the kinetic solution. The second case is in the continuum regime with Knudsen number $10^{-4}$, the physical space is divided into 40 equally distributed cells and the 8 Gauss-Hermite quadrature is used in velocity space. The UGKS solution is compared to the analytic NS solution. As shown in Fig. \ref{poi}, the UGKS solution well agrees with the reference solution. For the Knudsen number $10^{-4}$ case, the time step is about 200 times of the relaxation time, and in such a case, the traditional upwind flux based DVM solution significantly deviates from the analytical one due to large numerical dissipation \cite{chen2015comparative}. However, the integral solution based multiscale flux of UGKS accurately recovers the NS flux in such a flow regime.
\subsection{Micro flow through periodic square cylinders}
The second test case is about a pressure gradient driven flow passing through an array of square cylinders. One replicated square is picked as our computational domain. Periodic boundary condition is used for the computational domain boundary, and the solid boundary with accommodation $\alpha=1$ is used for the cylinder boundary. Two types of solid squares are considered, namely a solid square and a caved square.
For the solid square case, three Knudsen numbers $\text{Kn}=10^{-1},10^{-2},10^{-4}$ are calculated.
The size of spatial cell is $\Delta x=1/120$. For $\text{Kn}=10^{-1},10^{-2}$ the velocity space $[-5,5]\times[-5,5]$ are equally divided into $32\times32$ points, and for $\text{Kn}=10^{-4}$, the $8\times8$ Gauss-Hermite quadrature is used for velocity space. The UGKS results are shown in Figs. \ref{square1} to \ref{square3}.
For the $\text{Kn}=10^{-4}$ case, the numerical cell size is 83 times of the particle mean free path. We compare the streamline and the velocity profile along $y=0.25$ of UGKS and NS solution as shown in Figs. \ref{square2} and \ref{square3}. It can be observed that the NS solutions are well captured even with a cell size much larger than the kinetic scale.
Similar to the solid square, for the caved square case, three Knudsen numbers $\text{Kn}=10^{-1},10^{-2},10^{-4}$ are calculated and the results are shown in Figs. \ref{square4} to \ref{square6}. In continuum regime, the UGKS well agrees with NS solution as shown in Figs. \ref{square5} to \ref{square6}.

\subsection{Thermal creep micro flow}
For the third test case, we study the micro flow driven by a small temperature gradient. Consider a $1.0\times0.25$ rectangular cavity. The temperature of the left and right wall is $T^l=0$ and $T^r=1.0$. The temperature distribution of the top and bottom wall is $T^{t,b}=x$. We consider two flow regime with $\text{Kn}=10$ and $\text{Kn}=10^{-2}$.
The spatial cell size is set as $\Delta x=1/120$, and the velocity space $[-5,5]\times[-5,5]$ are equally divided into $32\times32$ points.
Similar to the nonlinear case \cite{huang2013}, a counterclockwise and clockwise streamline is formed on the top and bottom region of the cavity for the rarefied case, while a reversed streamline is formed in the continuum regime, as shown in Fig. \ref{creep1}. We also compare the UGKS solution to the NS solution in Fig. \ref{creep2}. It can be observed that the UGKS solution agrees well with NS solution for density, velocity and temperature distribution.

\subsection{Flow induced by a hot microbeam}
We study the flow induced by a hot microbeam in the transitional flow regime by UGKS and compare with the solutions with the R26 moment method \cite{sheng2014simulation} and the GSIS \cite{su2020can}. The numerical setup is the same as Zhu et al. \cite{zhu2019application}. A cavity with isothermal boundary is located at $[0,10]\times[0,8]$, inside which a hot microbeam is located at $[1,5]\times[1,3]$. The temperature of the cavity boundary is $T=0$ and the temperature of the microbeam boundary is $T=1$. The transitional flow regime with Knudsen number $\text{Kn}=5\times10^{-3}$ is simulated, and the $8\times8$ Gauss-Hermite quadrature is used for velocity space. After reaching the steady state, two thermal gradient induced vertexes will be formed at each corner of the microbeam. Three meshes have been used for UGKS simulation: a uniform mesh with $\Delta x=4\lambda$, a uniform mesh with $\Delta x=\lambda$, and a nonuniform mesh with the minimum cell size $\Delta x=0.1\lambda$, where $\lambda$ is the particle mean free path.
For the first two sets of mesh, the linearized UGKS is used, and for the third set of mesh, the nonlinear UGKS is used \cite{zhu2017implicit}.
The velocity magnitude and temperature distribution for $\Delta x=4\lambda$ mesh is shown in Fig. \ref{microbeam1}, and the streamline and heat flux is shown in Fig. \ref{microbeam2}. The comparison of the x-velocity profile along $x=0.5$, and y-velocity profile along $y=0.5$ is shown in Fig. \ref{microbeam3}. It can be observed that the results of UGKS under $\Delta x=4 \lambda$ agrees with the converged solution of R26 and GSIS, while the velocity magnitude of UGKS decreases as mesh gets refined. Limited by the large computational cost, the finest mesh we use for UGKS is $\Delta x=0.1\lambda$, which is quite close to the UGKS converged solution. It can be observed that the magnitude of UGKS converged solution is half the R26 and GSIS results, and further verification is needed.

\subsection{Lid-driven cavity flow}
The last test case is the simulation of the lid-driven cavity flow. The cavity is located at $[0,1]\times[0,1]$, and the top lid is moving towards positive x direction with a small velocity $\vec{U}=U_{\delta}\vec{x}$. Two Knudsen numbers are considered, namely $\text{Kn}=10^{-1}$ and $\text{Kn}=10^{-4}$. The spatial cell size for both cases are $\Delta x=0.01$, and the velocity space for $\text{Kn}=10^{-1}$ is $[-5,5]\times[-5,5]$ divided by $32\times32$ velocity points, and $8\times8$ Gauss-Hermite quadrature is used for $\text{Kn}=10^{-4}$ case. The results of UGKS is compared to NS solution as shown in Figs. \ref{cavity1}-\ref{cavity4}. For the rarefied case, it can be observed that the NS equations break down, especially for the heat flux calculation. The special heat transfer from hot region to cold region is not captured by NS solution, while the velocity profile doesn't deviate that far. In the continuum regime, the UGKS recovers the NS solution. Especially for the velocity field, the UGKS and NS solutions are identical, even with the UGKS cell size 100 times larger than the particle mean free path.
For this test case, the linearized UGKS is about 2.5 times faster than the nonlinear UGKS under same cell size and velocity points, while the flux calculation of linearized UGKS is about 3.5 times faster than nonlinear UGKS.

\section{Conclusion}
In this paper, we extend the UGKS to the micro flow simulation. Compare to the nonlinear UGKS, the linearized UGKS is faster and more accurate for the micro flow simulation. The multiscale property of UGKS is inherited by the linearized UGKS. In the rarefied regime, the linearized UGKS well captures the linear kinetic solution. In the continuum regime, the viscous solution can be accurately captured by UGKS even with cell size and time step much larger than the particle mean free path and collision time.
Theoretically, we prove the unified preserving property of UGKS, which shows that UGKS is a second order UP scheme.
The proof holds for both the linearized UGKS and nonlinear UGKS.
In term of flux calculation, the linearized UGKS is more than three times faster than the nonlinear UGKS. Combining the linearized UGKS and implicit technique such as LU-SGS and multigrid \cite{zhu2016implicit,zhu2017implicit}, the UGKS will be a powerful numerical tool for the study of micro flow in the fields of porous media and MEMS.

\section*{Funding}
The current research is supported by Hong Kong research grant council (16206617)
and National Science Foundation of China (11772281, 91852114), and the National Numerical Windtunnel project. 

\section*{Availability of data and materials}

All data and materials are available upon request.

\subsection*{Availability of supporting data}
 Not applicable

\section*{Authors’ contributions}
Our group has been working on the topic for a long time.
The research output is coming from our joint effort.
All authors read and approved the final manuscript.

\section*{Competing interests}
The authors declare that they have no competing interests.

\begin{figure}
\centering
\includegraphics[width=0.48\textwidth]{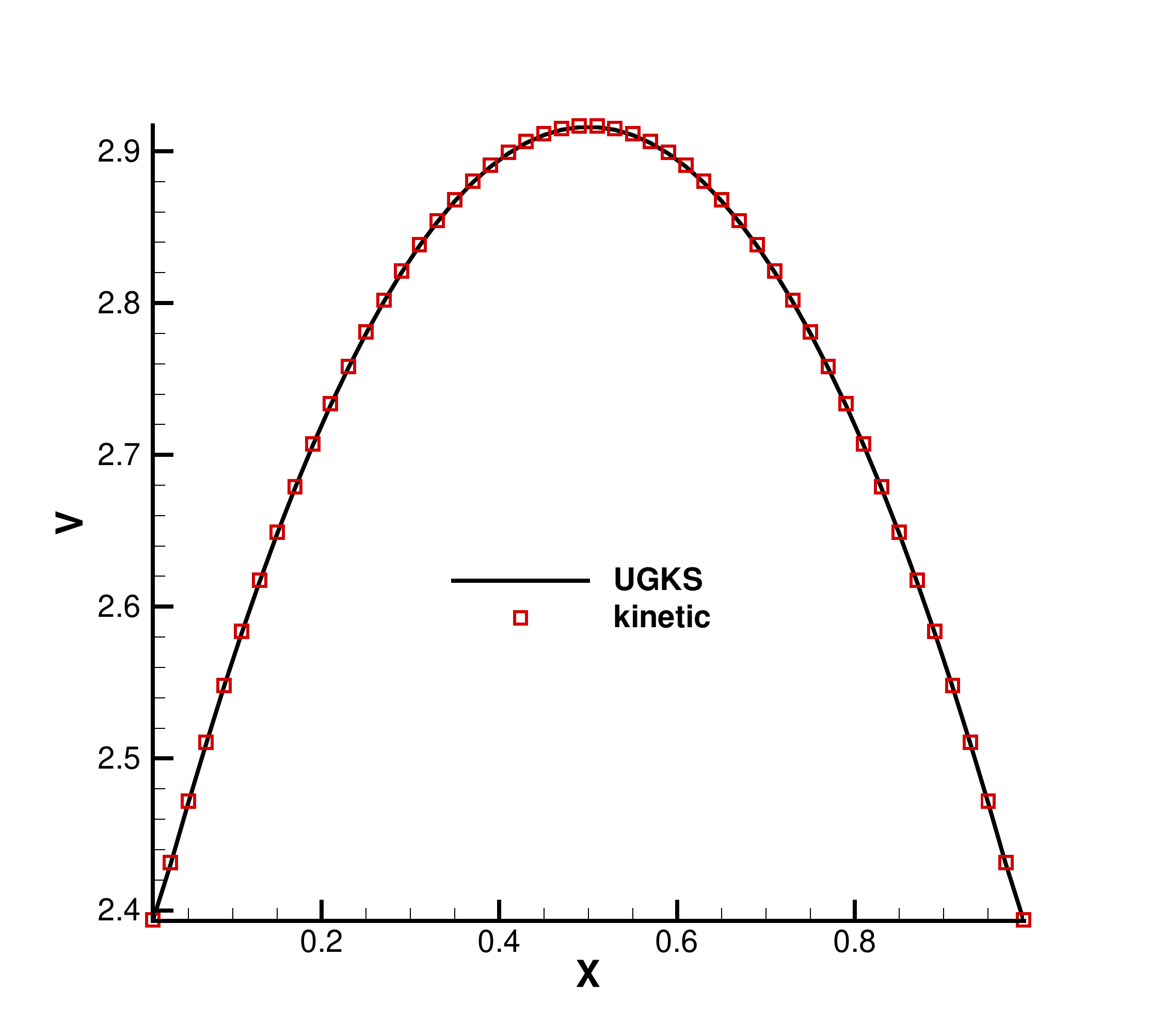}
\includegraphics[width=0.48\textwidth]{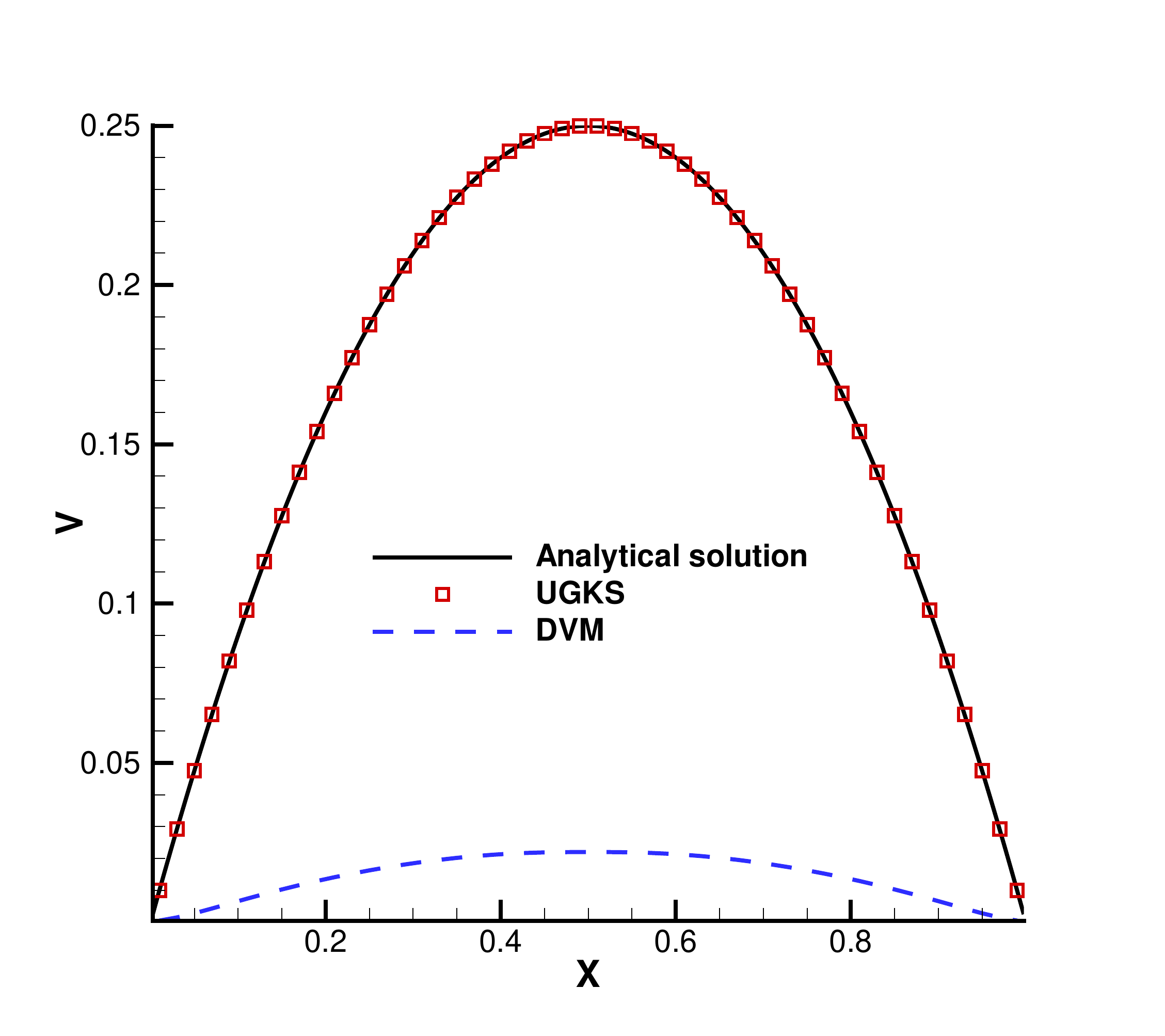}
\caption{The steady solution of Poiseuille flow. Symbol shows the UGKS solution, comparing to the analytical solution and traditional discrete ordinate method solution. Left figure shows the solution with Knudsen number $\text{Kn}=1.0$, and right figure shows the solution with Knudsen number $\text{Kn}=1.0\times10^{-4}$.}
\label{poi}
\end{figure}

\begin{figure}
\centering
\includegraphics[width=0.48\textwidth]{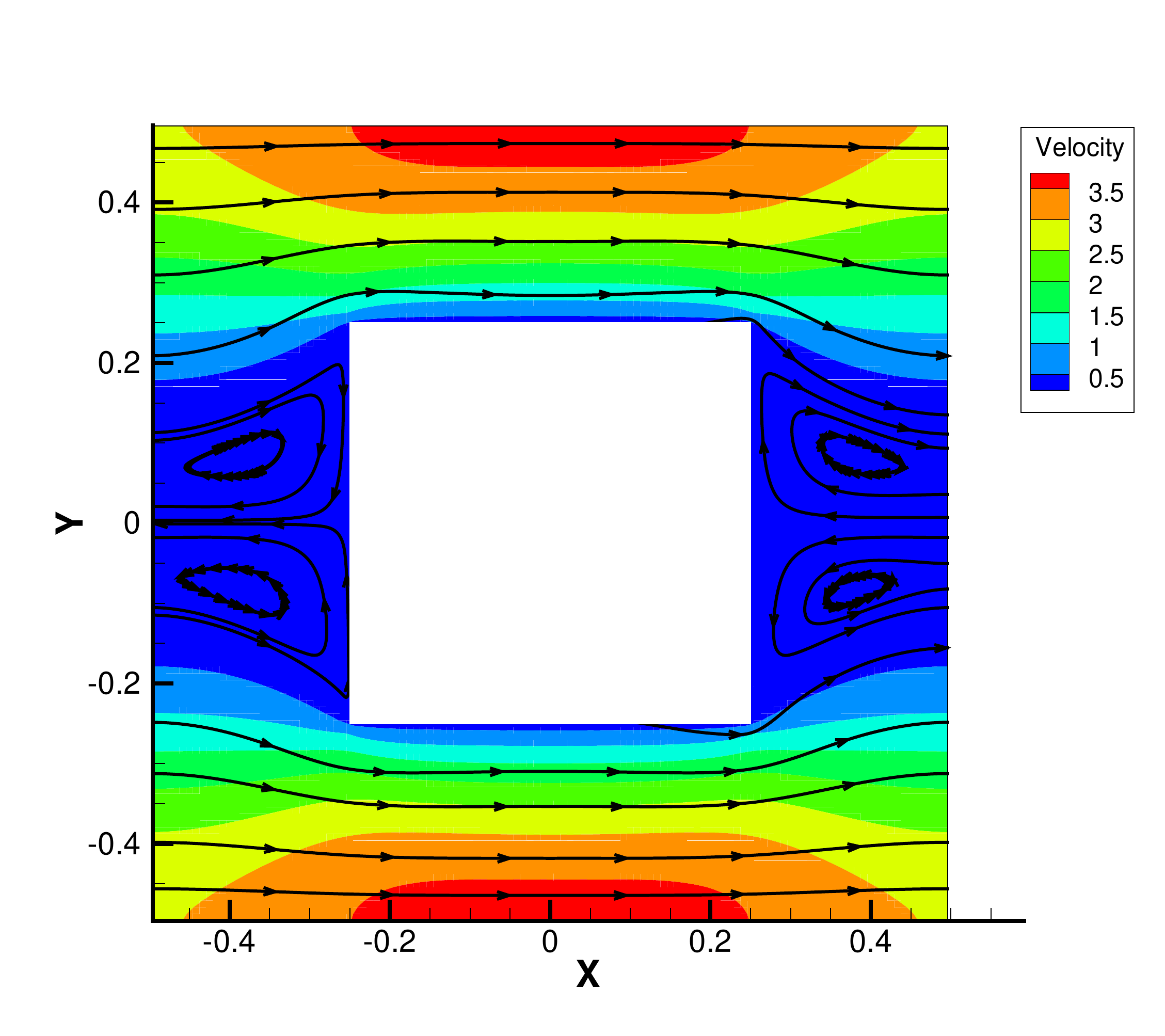}
\includegraphics[width=0.48\textwidth]{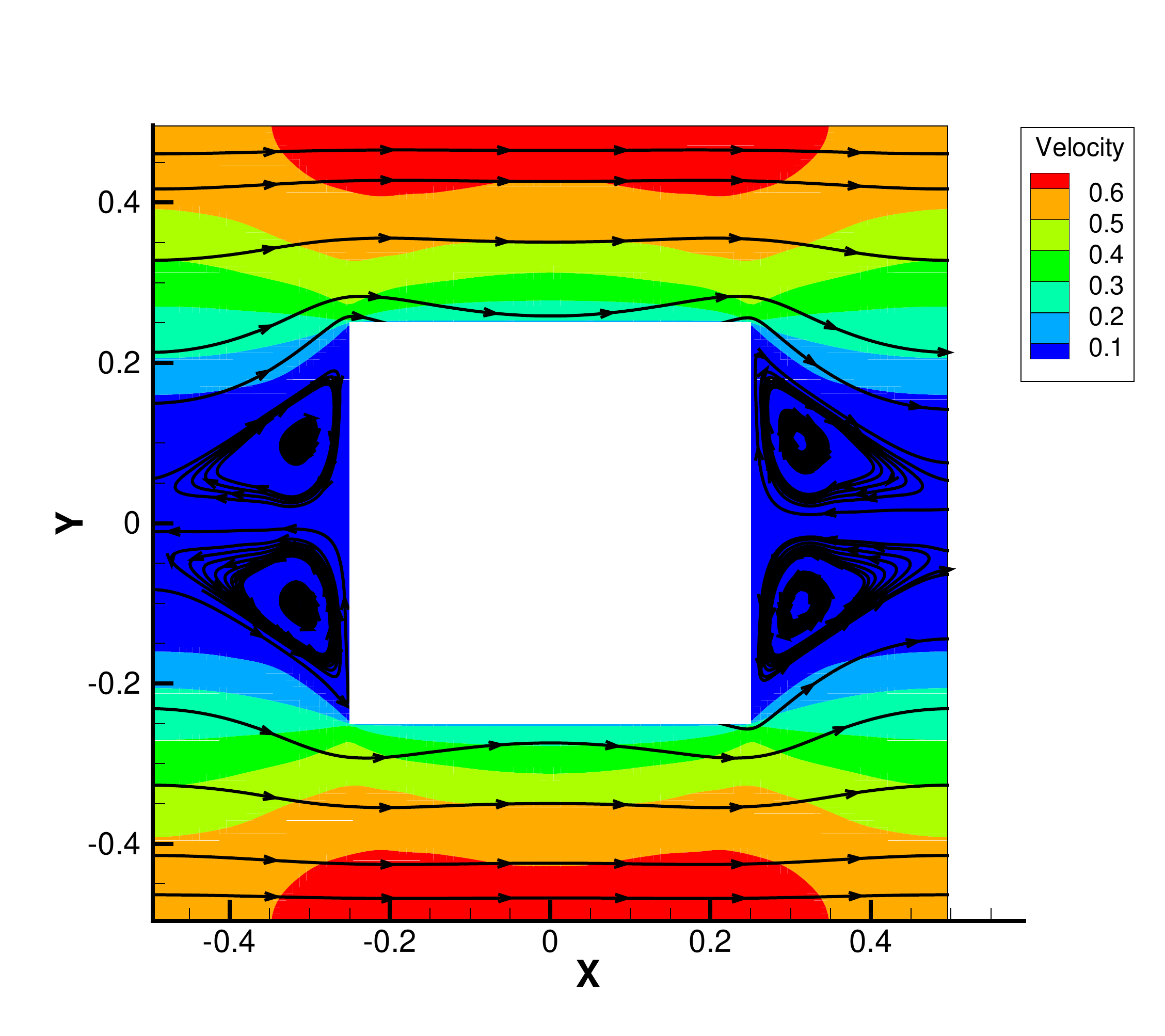}
\caption{The streamline and velocity magnitude of the micro flow through periodic square cylinders. Left figure shows the solution with Knudsen number $\text{Kn}=1.0\times10^{-1}$, and right figure shows the solution with Knudsen number $\text{Kn}=1.0\times10^{-2}$.}
\label{square1}
\end{figure}

\begin{figure}
\centering
\includegraphics[width=0.48\textwidth]{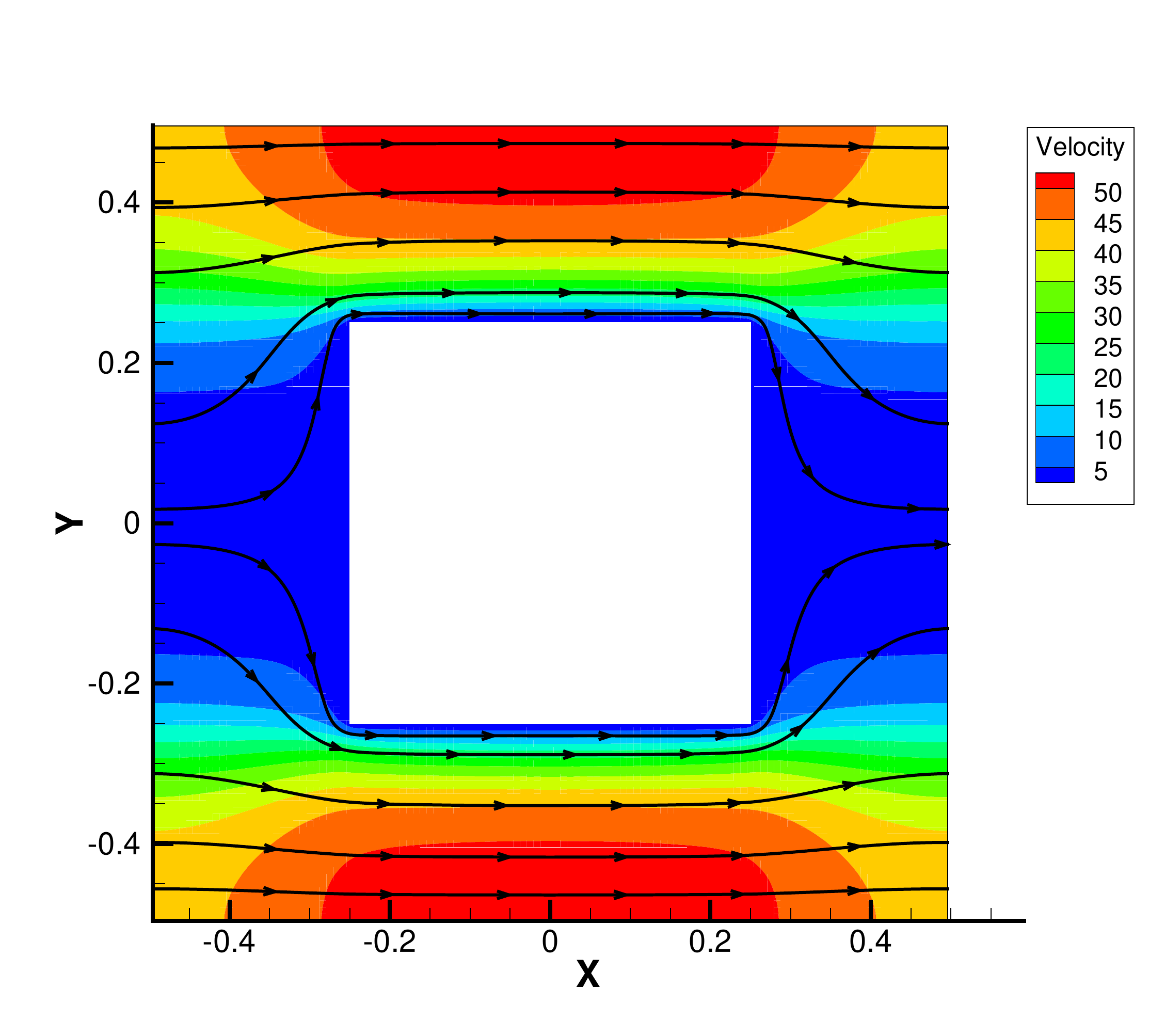}
\includegraphics[width=0.48\textwidth]{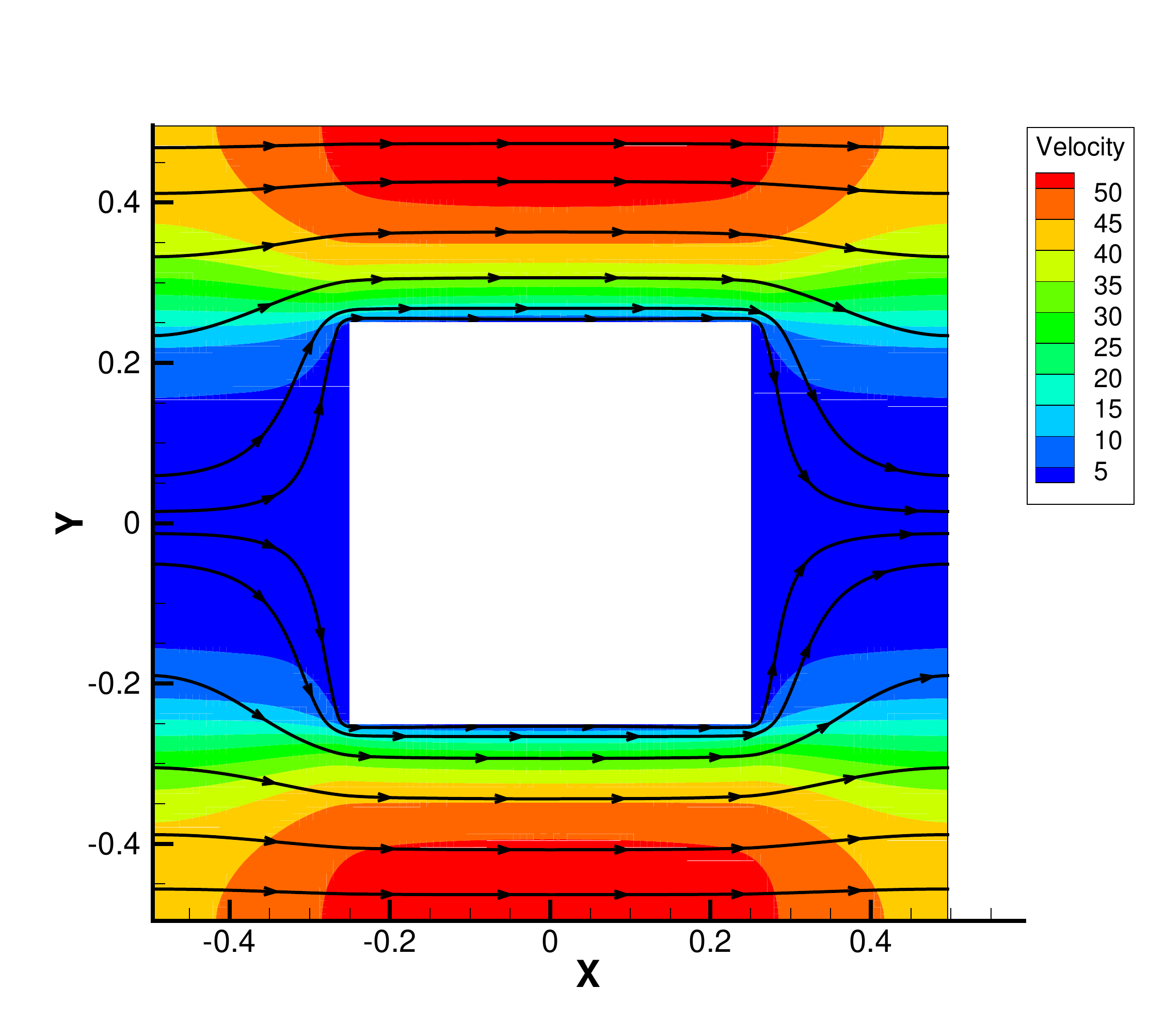}
\caption{The streamline and velocity magnitude of the micro flow through periodic square cylinders with Knudsen number $\text{Kn}=1.0\times10^{-4}$. Left figure shows UGKS solution and right figure shows the NS solution by GKS.}
\label{square2}
\end{figure}

\begin{figure}
\centering
\includegraphics[width=0.48\textwidth]{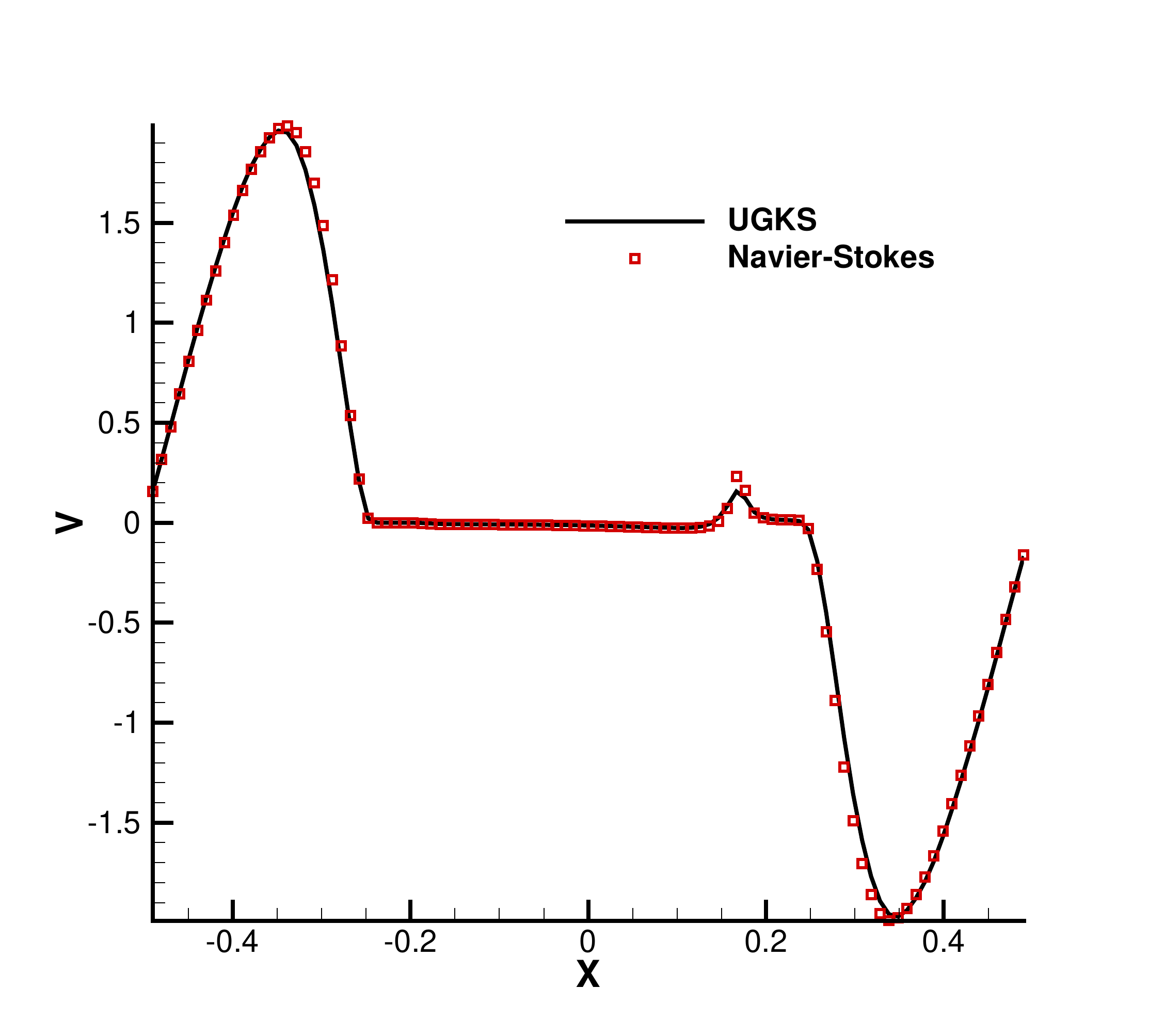}
\includegraphics[width=0.48\textwidth]{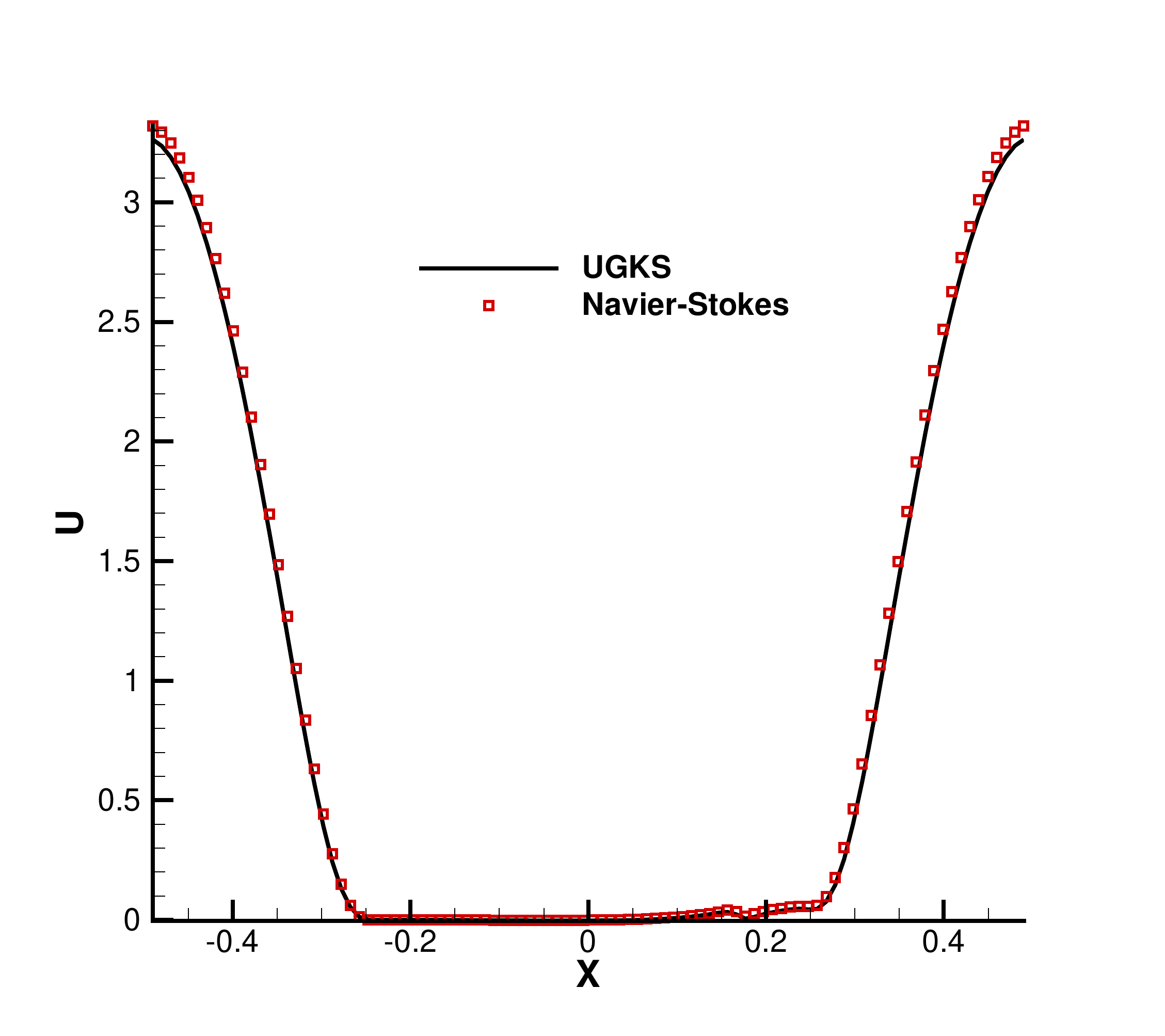}
\caption{The comparison of UGKS and NS velocity profile along $y=0.25$ for the micro flow through periodic square cylinders.}
\label{square3}
\end{figure}

\begin{figure}
\centering
\includegraphics[width=0.48\textwidth]{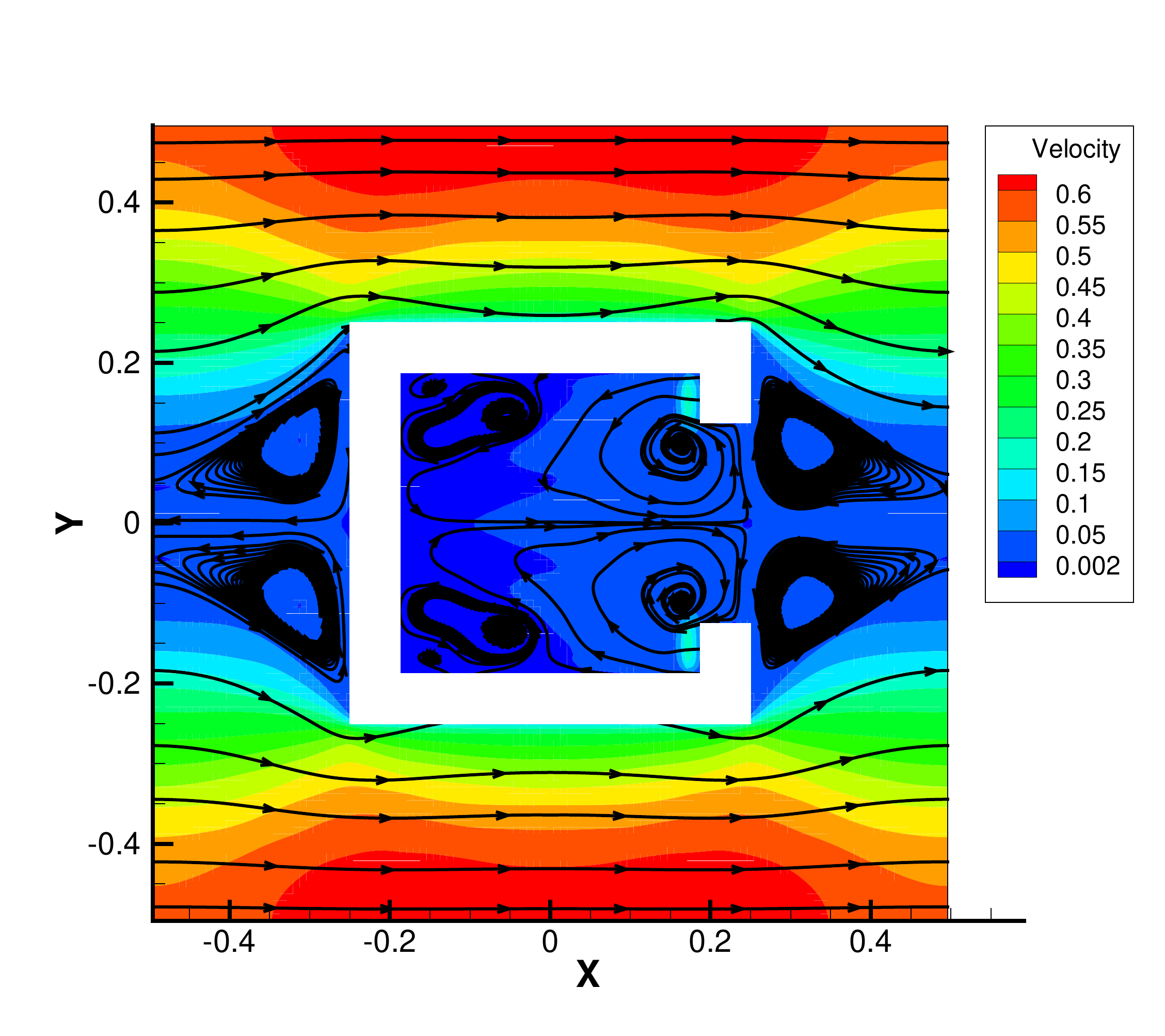}
\includegraphics[width=0.48\textwidth]{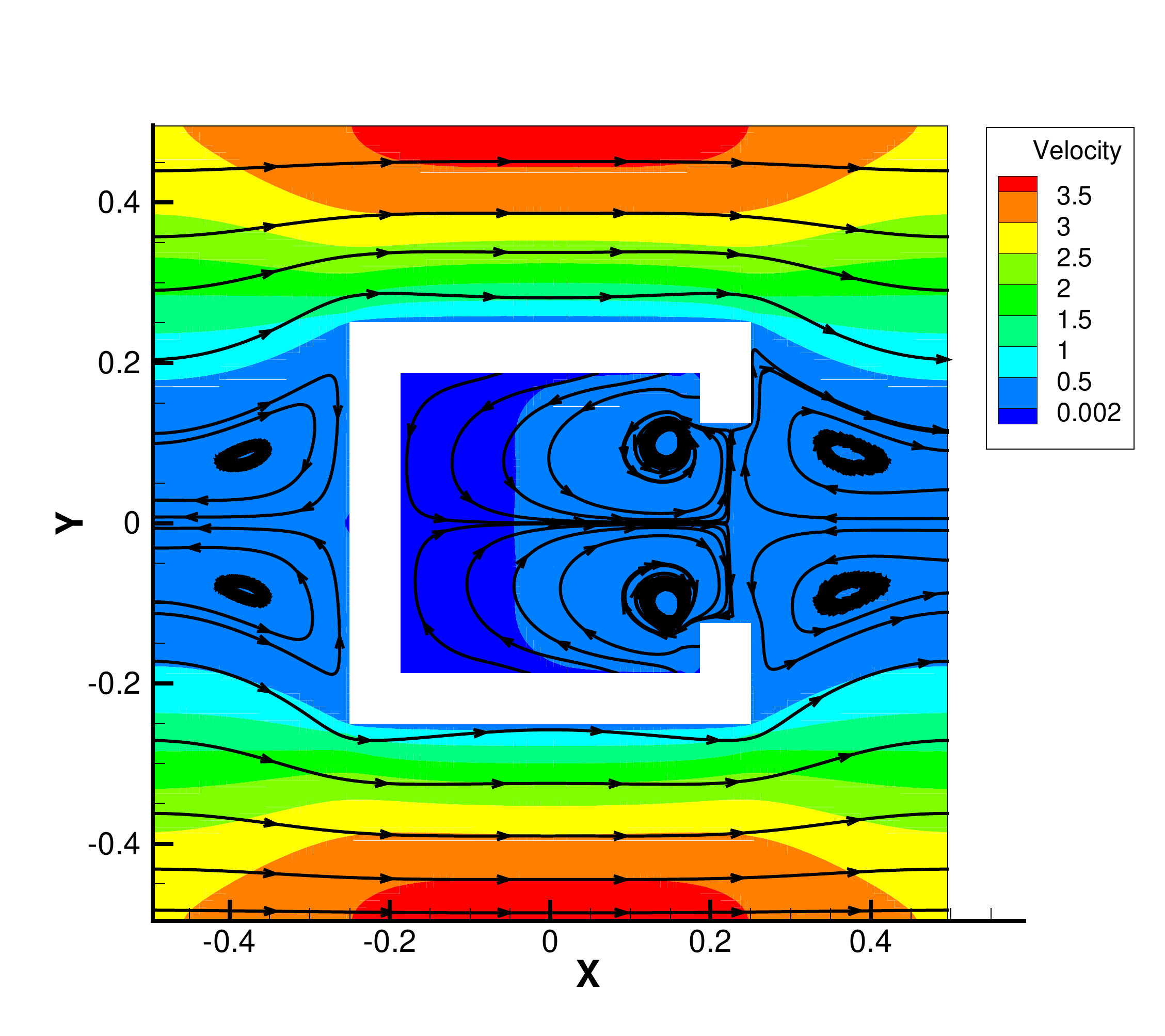}
\caption{The streamline and velocity magnitude of the micro flow through periodic square cylinders. Left figure shows the solution with Knudsen number $\text{Kn}=1.0\times10^{-1}$, and right figure shows the solution with Knudsen number $\text{Kn}=1.0\times10^{-2}$.}
\label{square4}
\end{figure}

\begin{figure}
\centering
\includegraphics[width=0.48\textwidth]{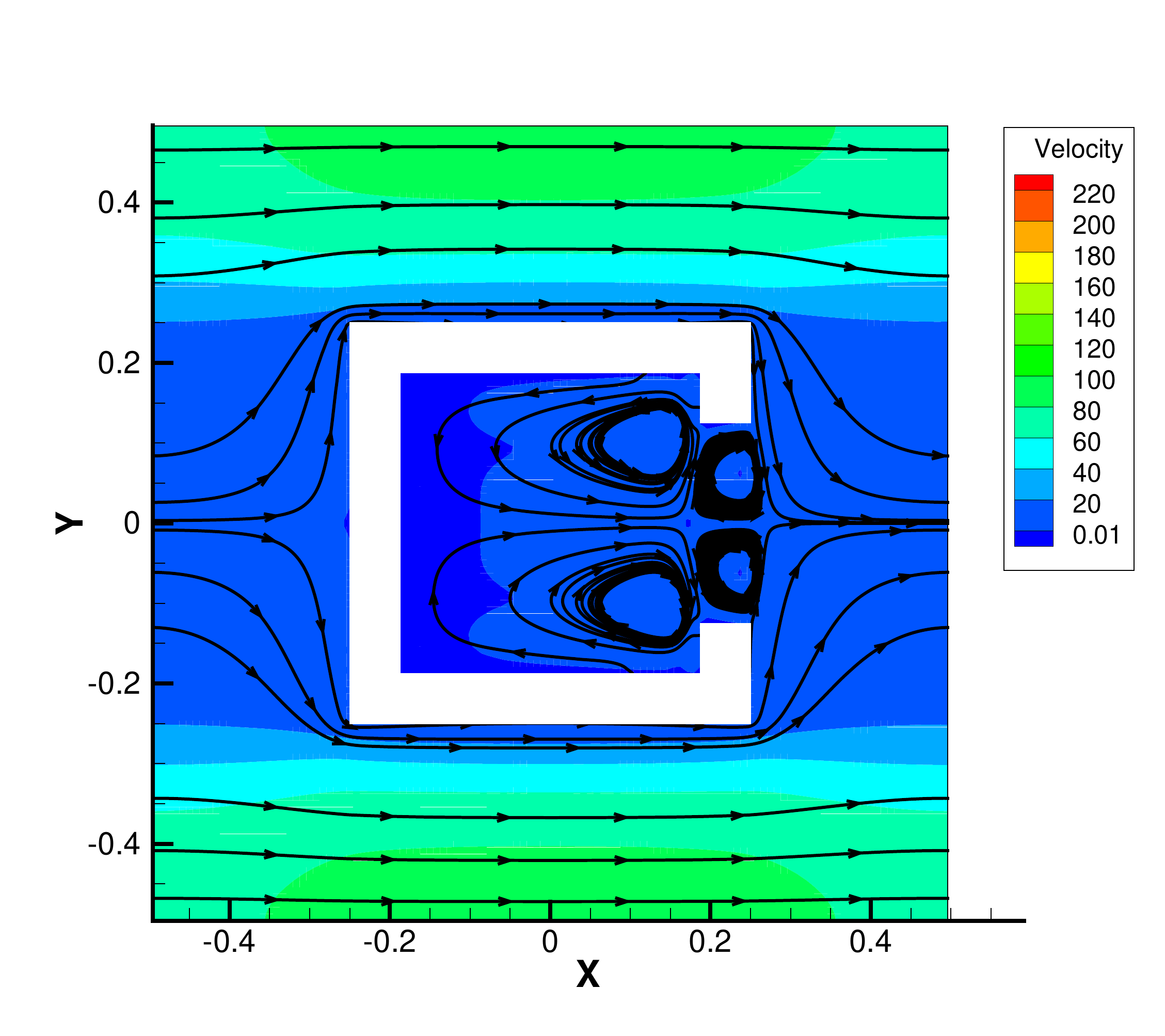}
\includegraphics[width=0.48\textwidth]{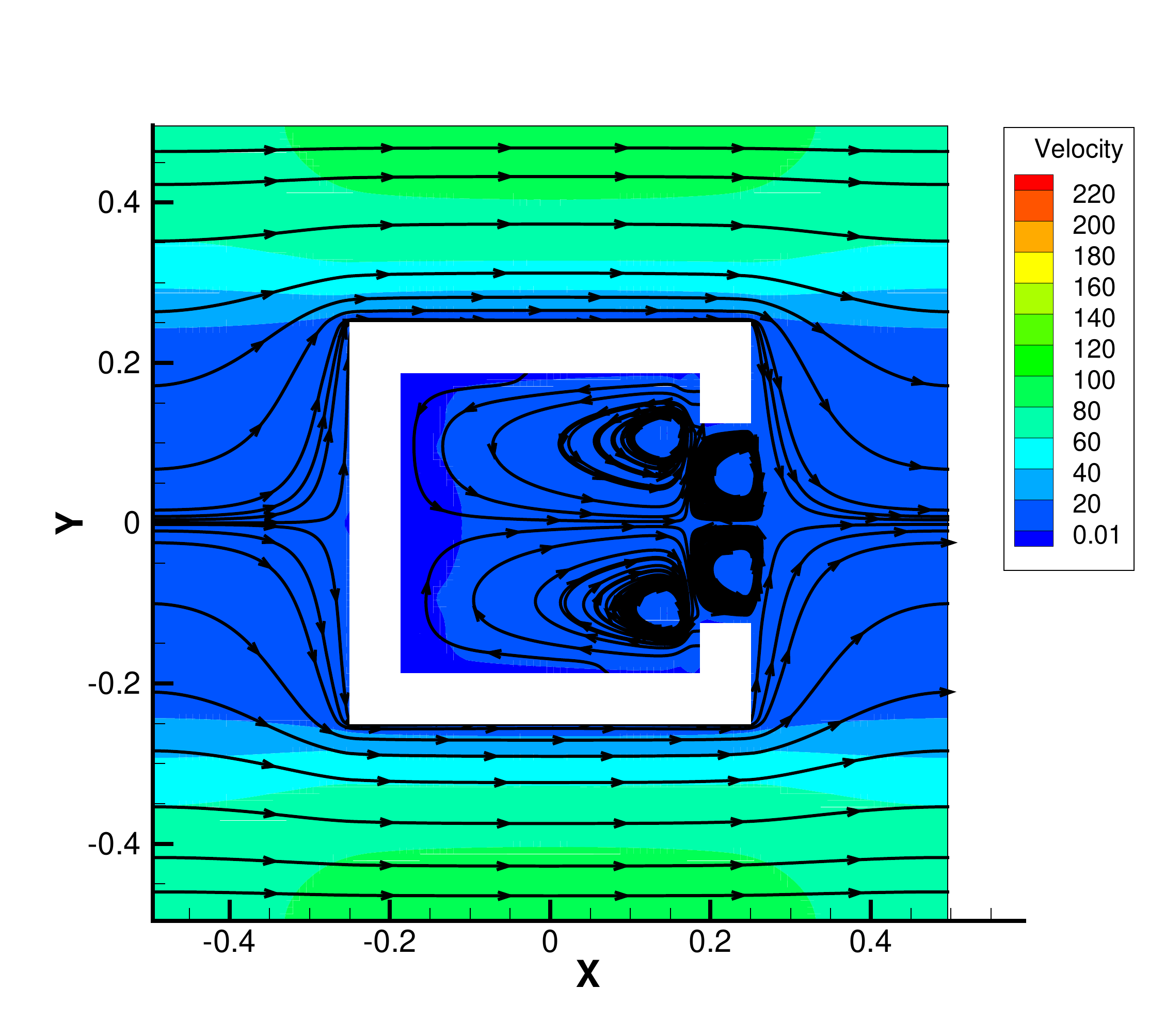}
\caption{The streamline and velocity magnitude of the micro flow through periodic square cylinders with Knudsen number $\text{Kn}=1.0\times10^{-4}$. Left figure shows UGKS solution and right figure shows the NS solution by GKS.}
\label{square5}
\end{figure}

\begin{figure}
\centering
\includegraphics[width=0.48\textwidth]{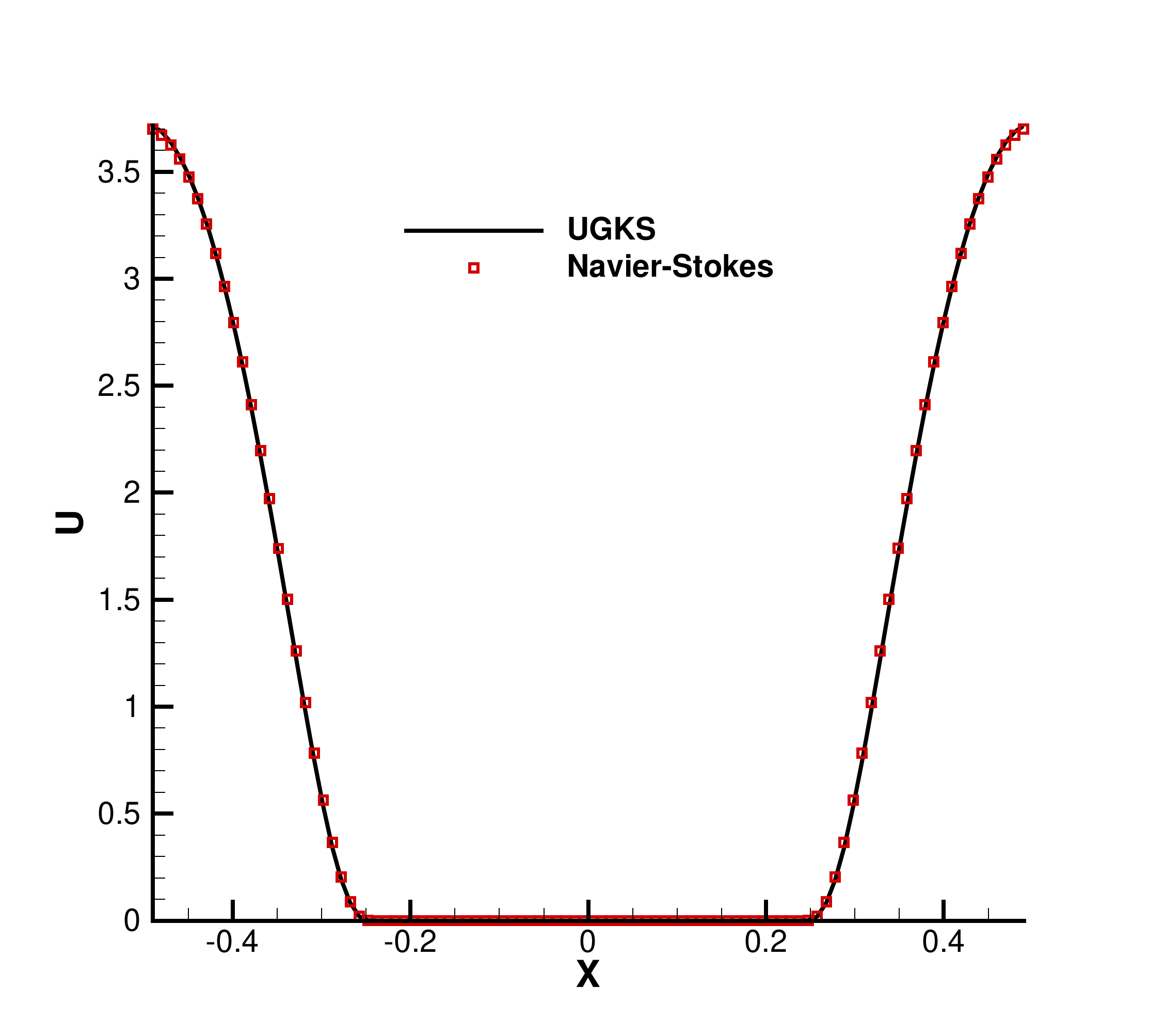}
\includegraphics[width=0.48\textwidth]{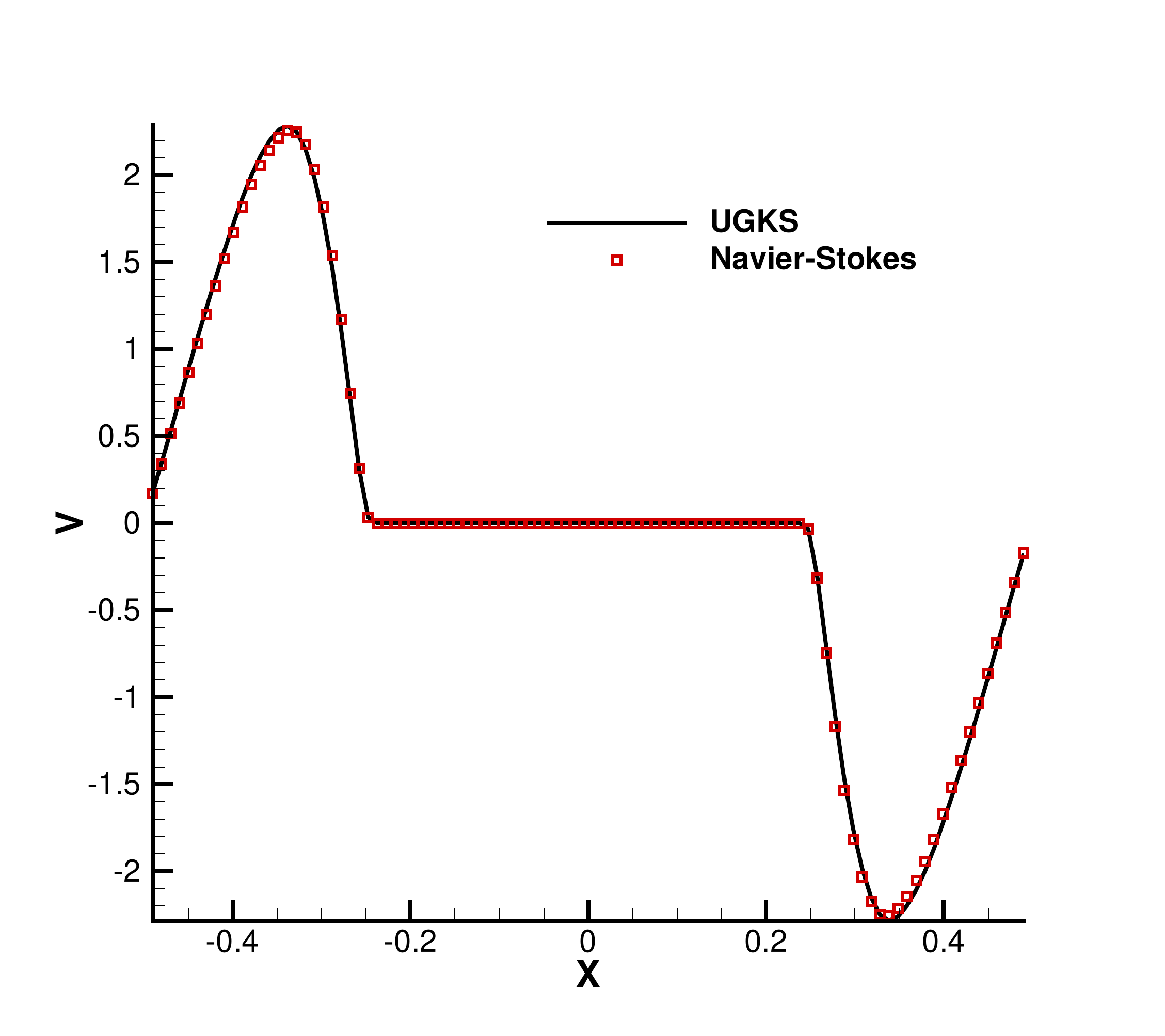}
\caption{The comparison of UGKS and NS velocity profile along $y=0.25$ for the micro flow through periodic square cylinders.}
\label{square6}
\end{figure}

\begin{figure}
\centering
\includegraphics[width=0.48\textwidth]{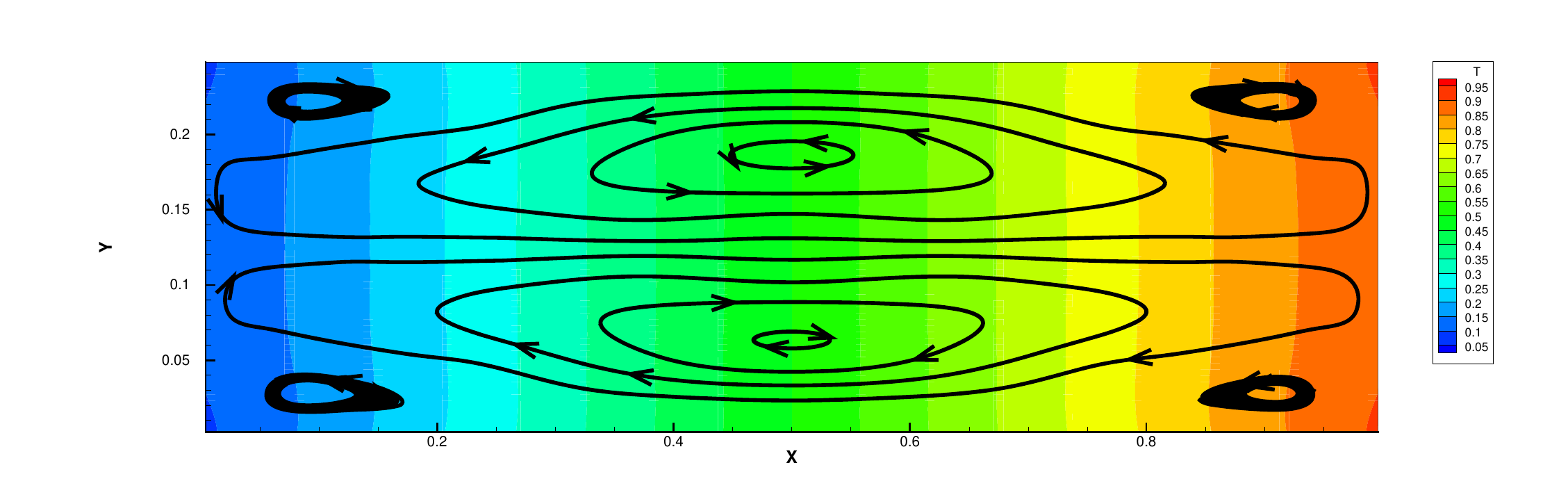}{a}
\includegraphics[width=0.48\textwidth]{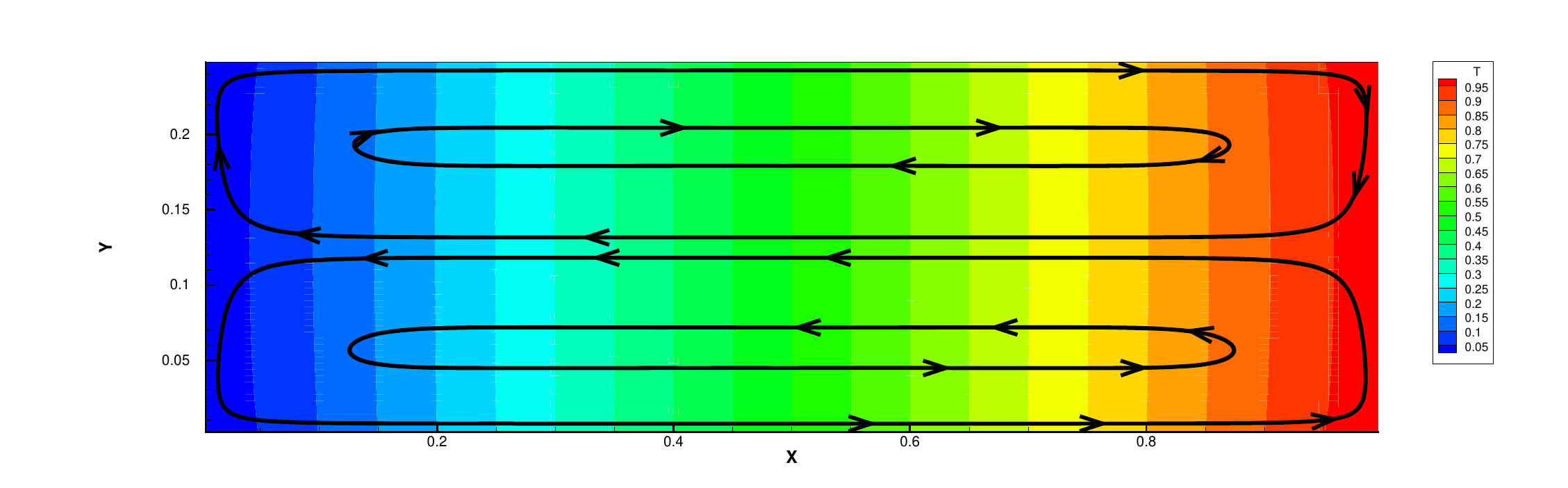}{b}
\caption{The streamline and temperature distribution of the thermal creep flow. Left figure shows the solution with Knudsen number $\text{Kn}=10$, and right figure shows the solution with Knudsen number $\text{Kn}=1.0\times10^{-2}$.}
\label{creep1}
\end{figure}

\begin{figure}
\centering
\includegraphics[width=0.48\textwidth]{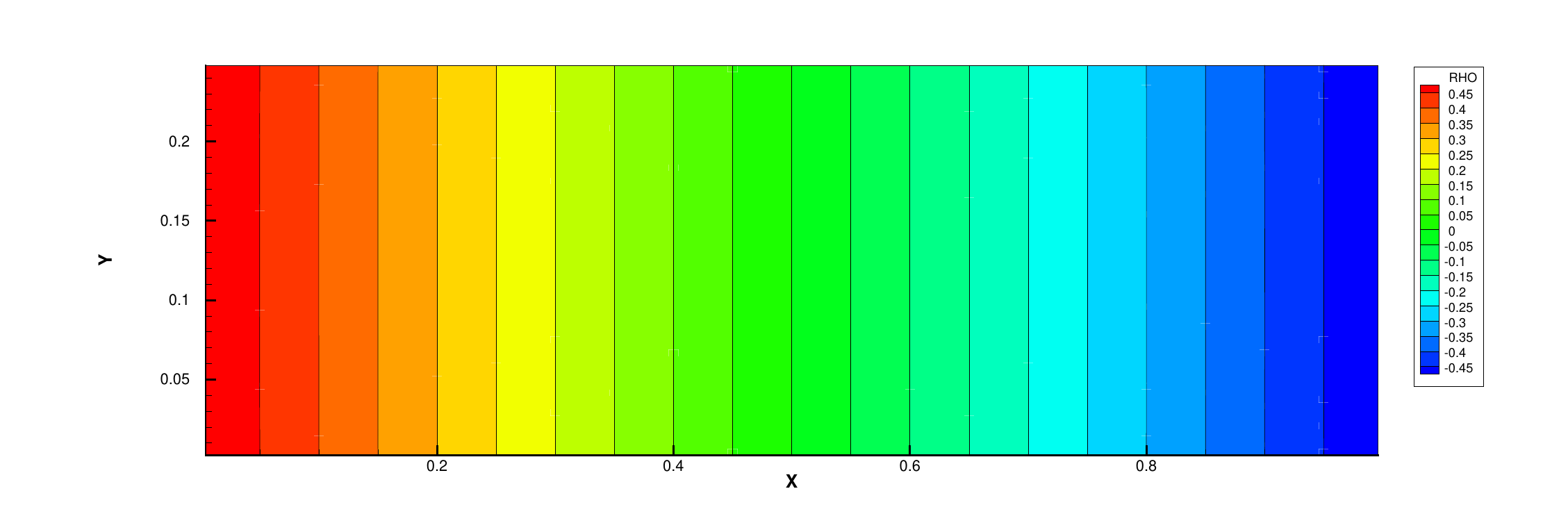}{a}
\includegraphics[width=0.48\textwidth]{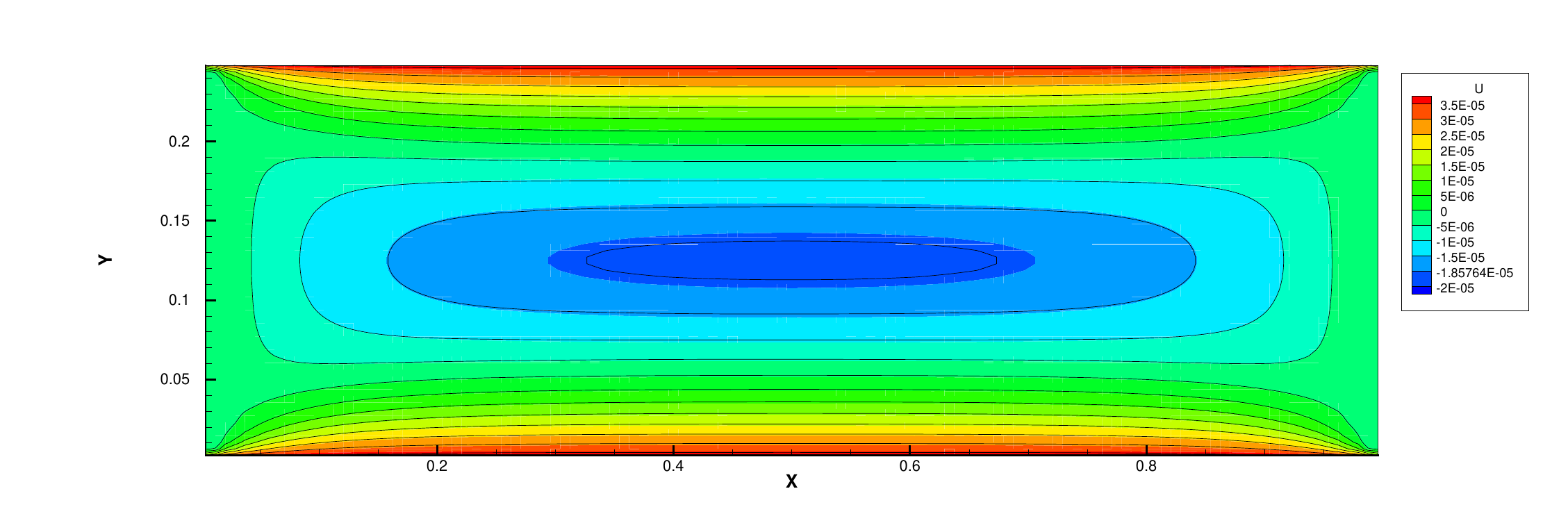}{b}
\includegraphics[width=0.48\textwidth]{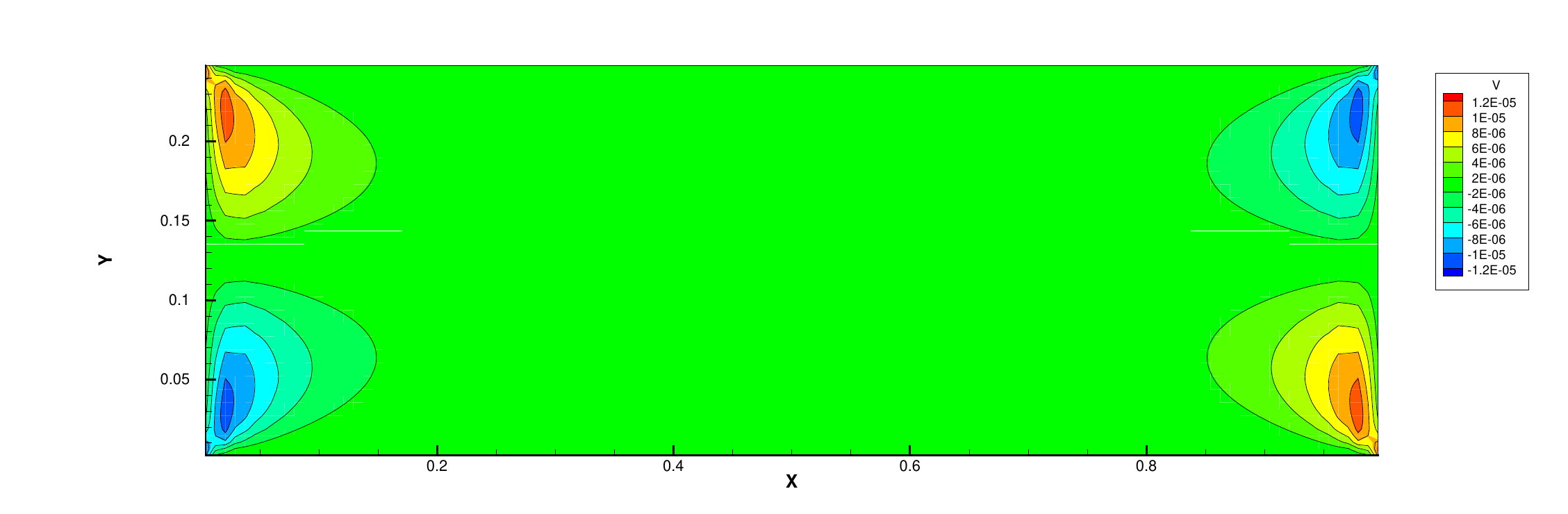}{c}
\includegraphics[width=0.48\textwidth]{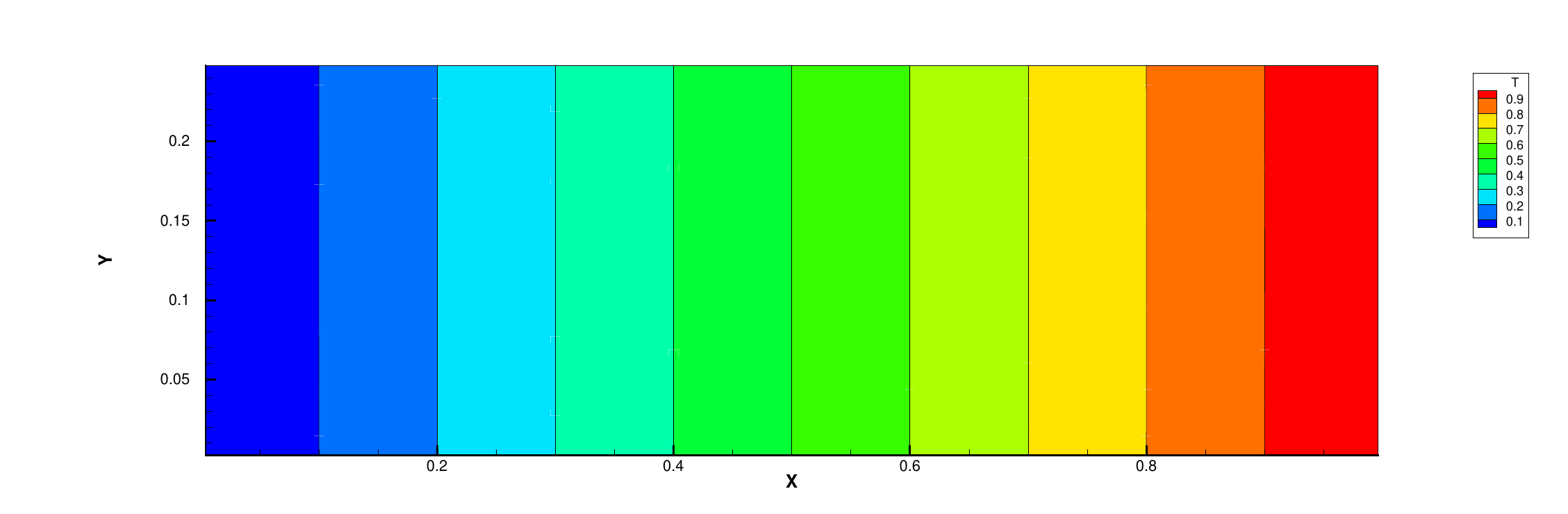}{d}
\caption{The comparison of UGKS (contour) and NS (solid line) solution for the thermal creep flow. (a) density distribution; (b) x-directional velocity distributionl; (c) y-directional velocity distributionl; (d) temperature distribution.}
\label{creep2}
\end{figure}

\begin{figure}
\centering
\includegraphics[width=0.48\textwidth]{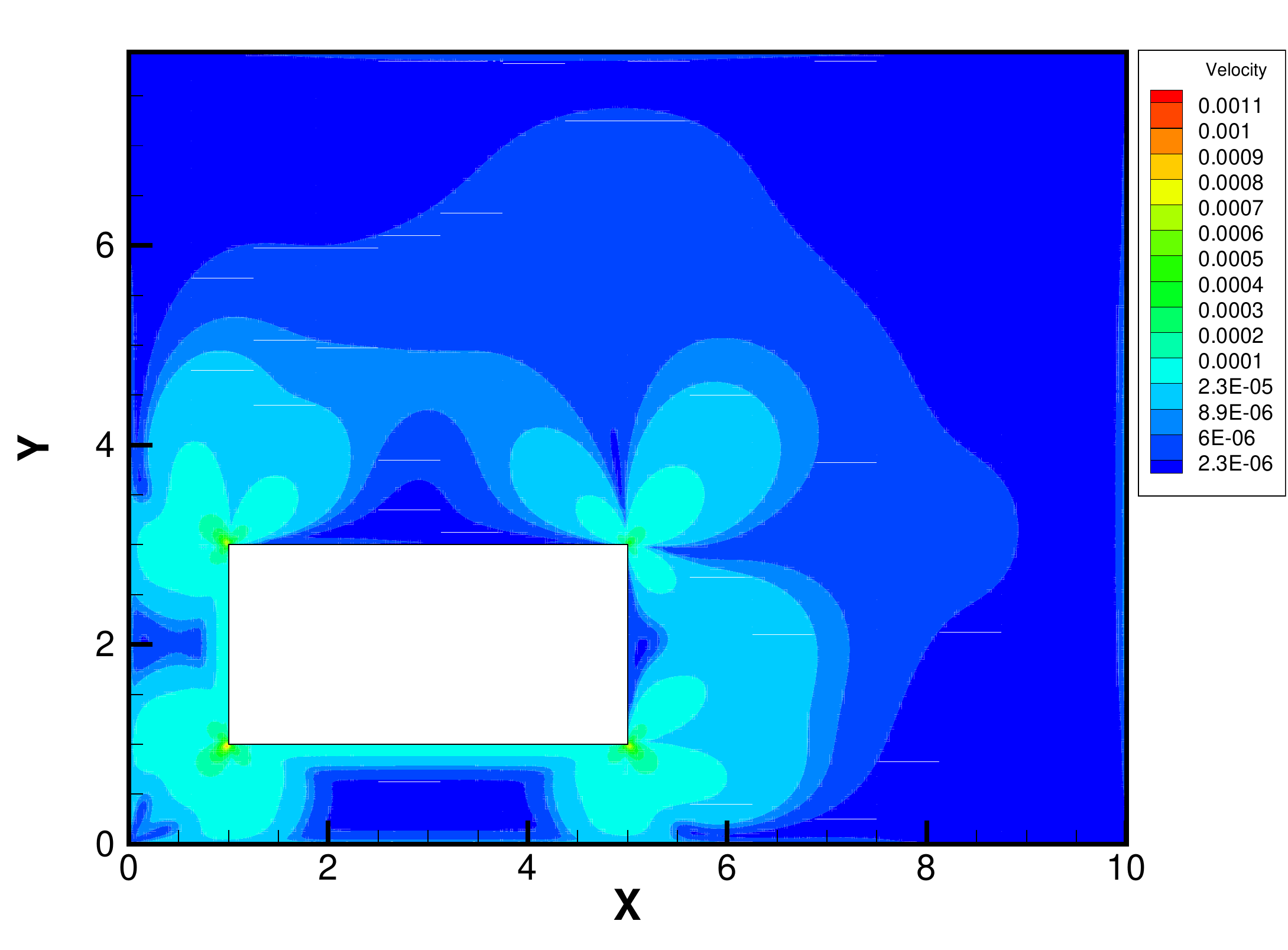}{a}
\includegraphics[width=0.48\textwidth]{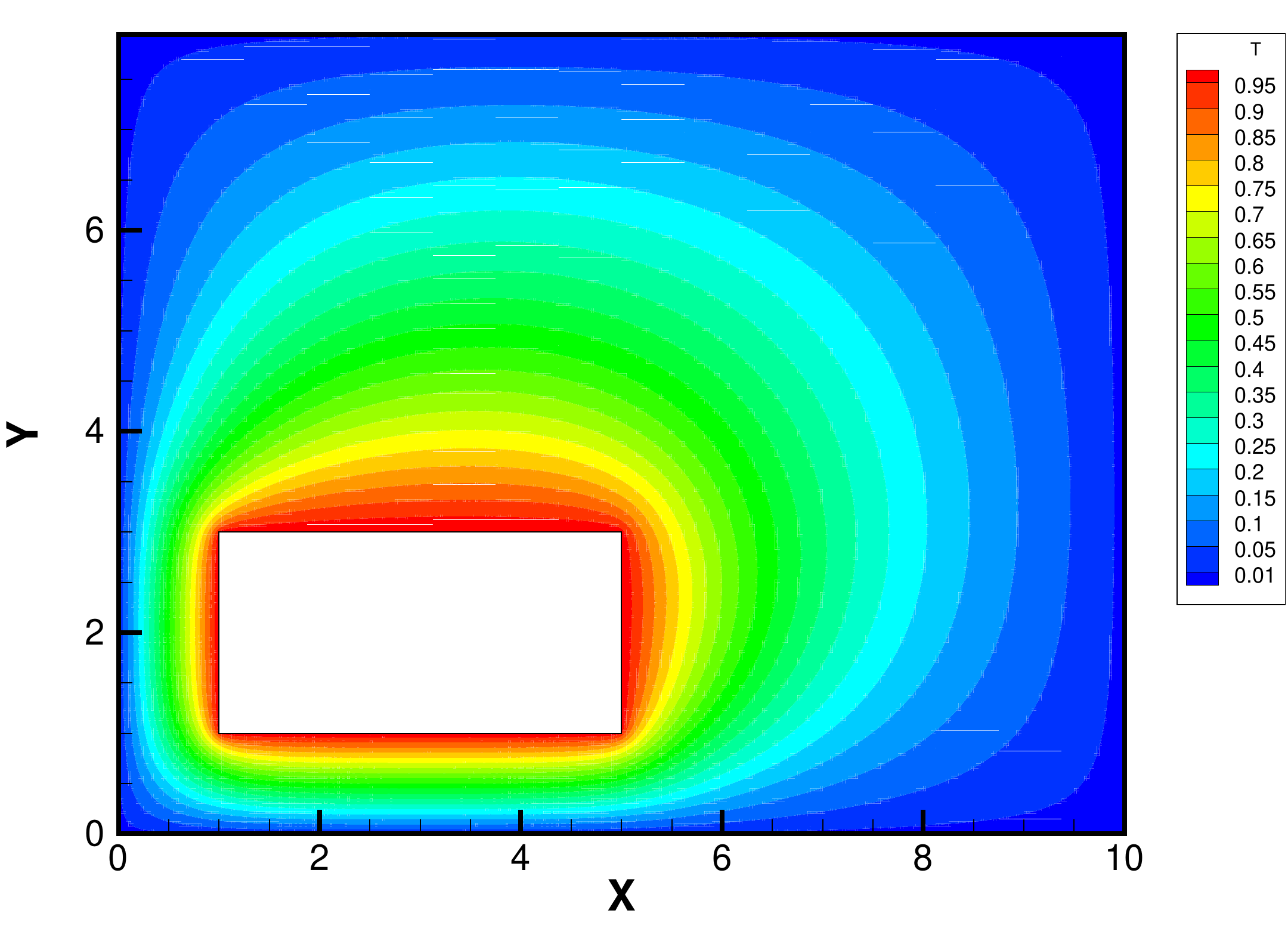}{b}
\caption{The velocity magnitude (left) and temperature distribution (right) of the microbeam flow.}
\label{microbeam1}
\end{figure}

\begin{figure}
\centering
\includegraphics[width=0.5\textwidth]{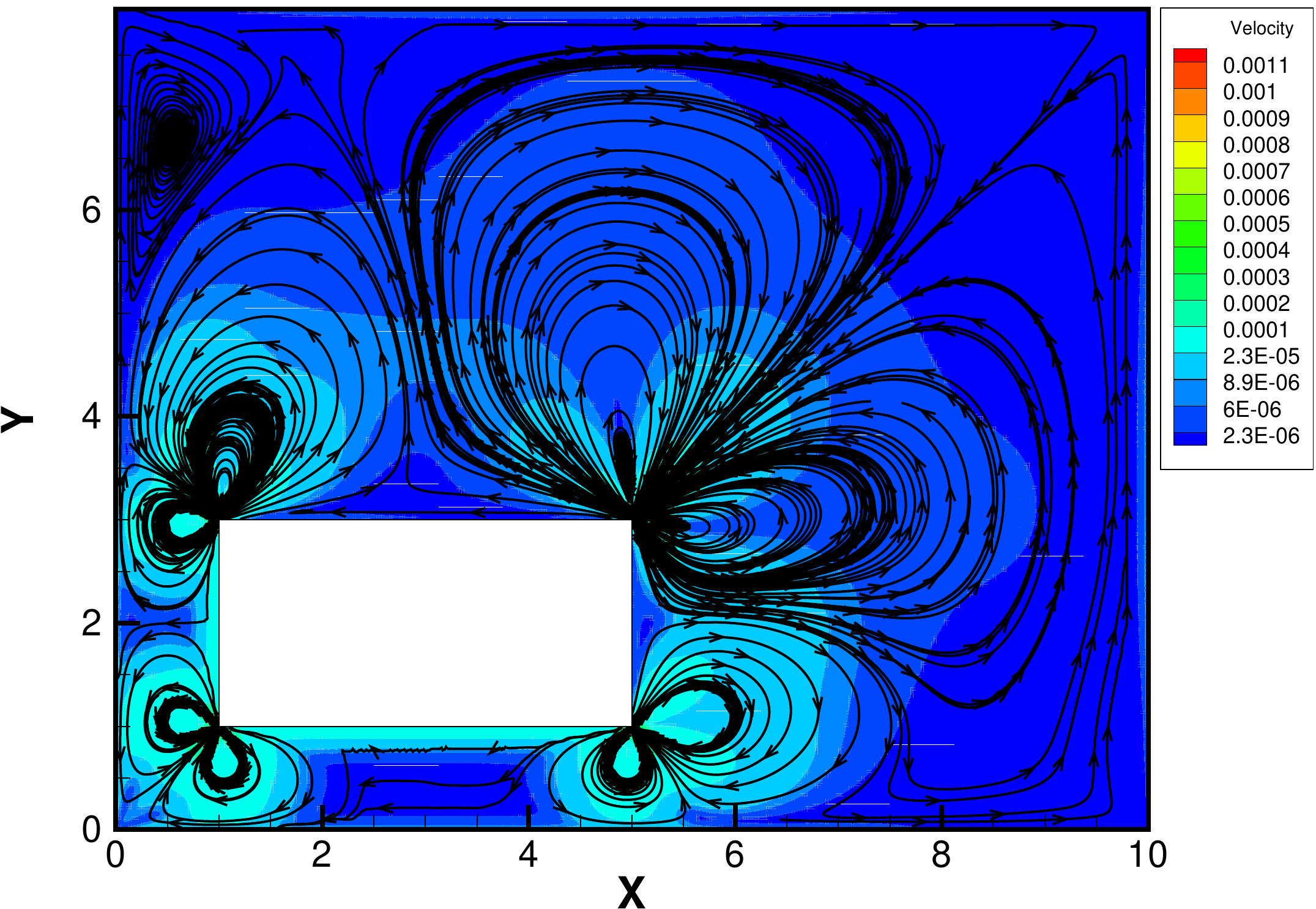}
\includegraphics[width=0.48\textwidth]{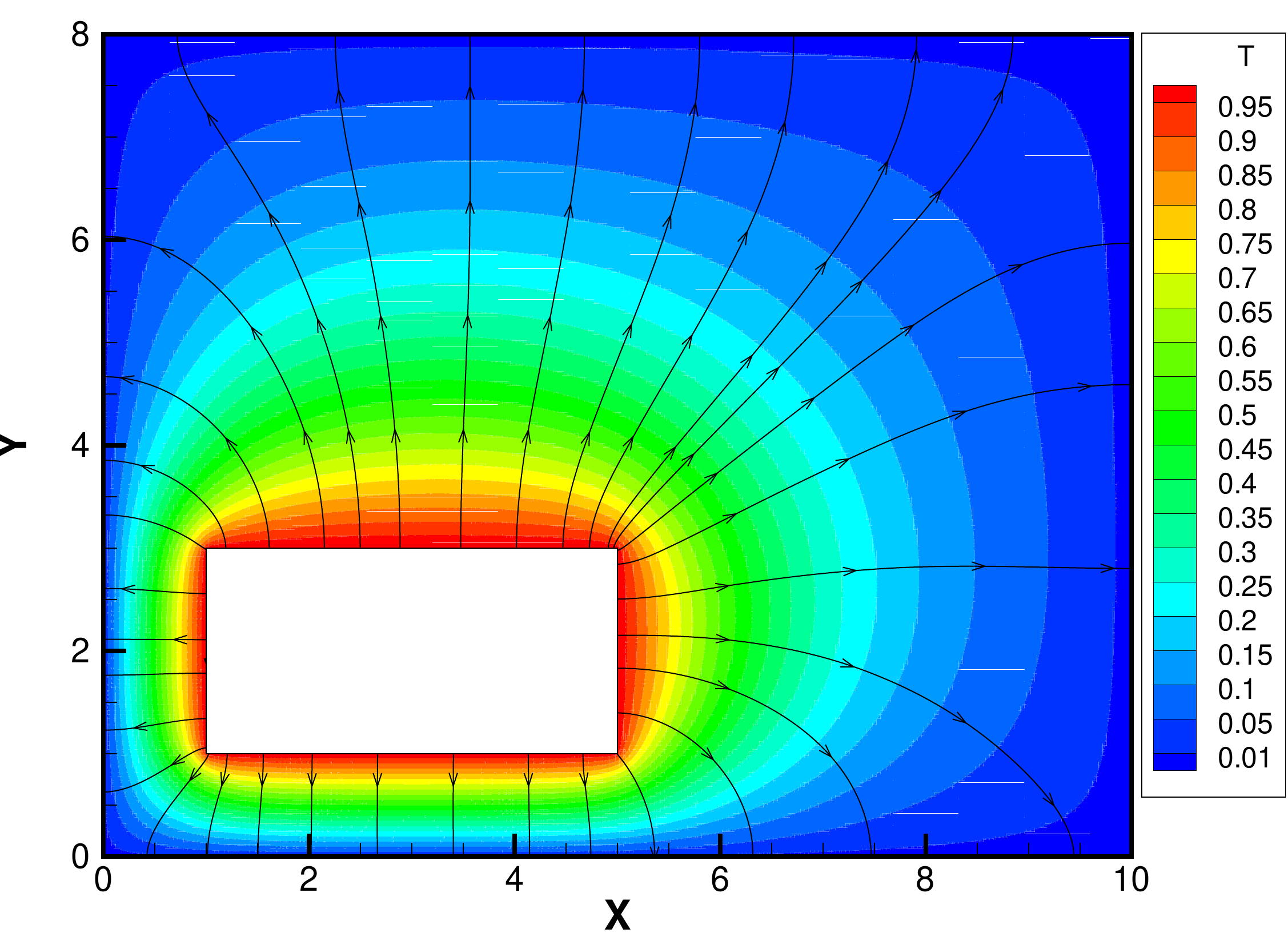}
\caption{Left figure shows the streamline of the microbeam flow with the velocity magnitude background, and right figure shows the heat flow of the microbeam flow with the temperature magnitude background.}
\label{microbeam2}
\end{figure}

\begin{figure}
\centering
\includegraphics[width=0.48\textwidth]{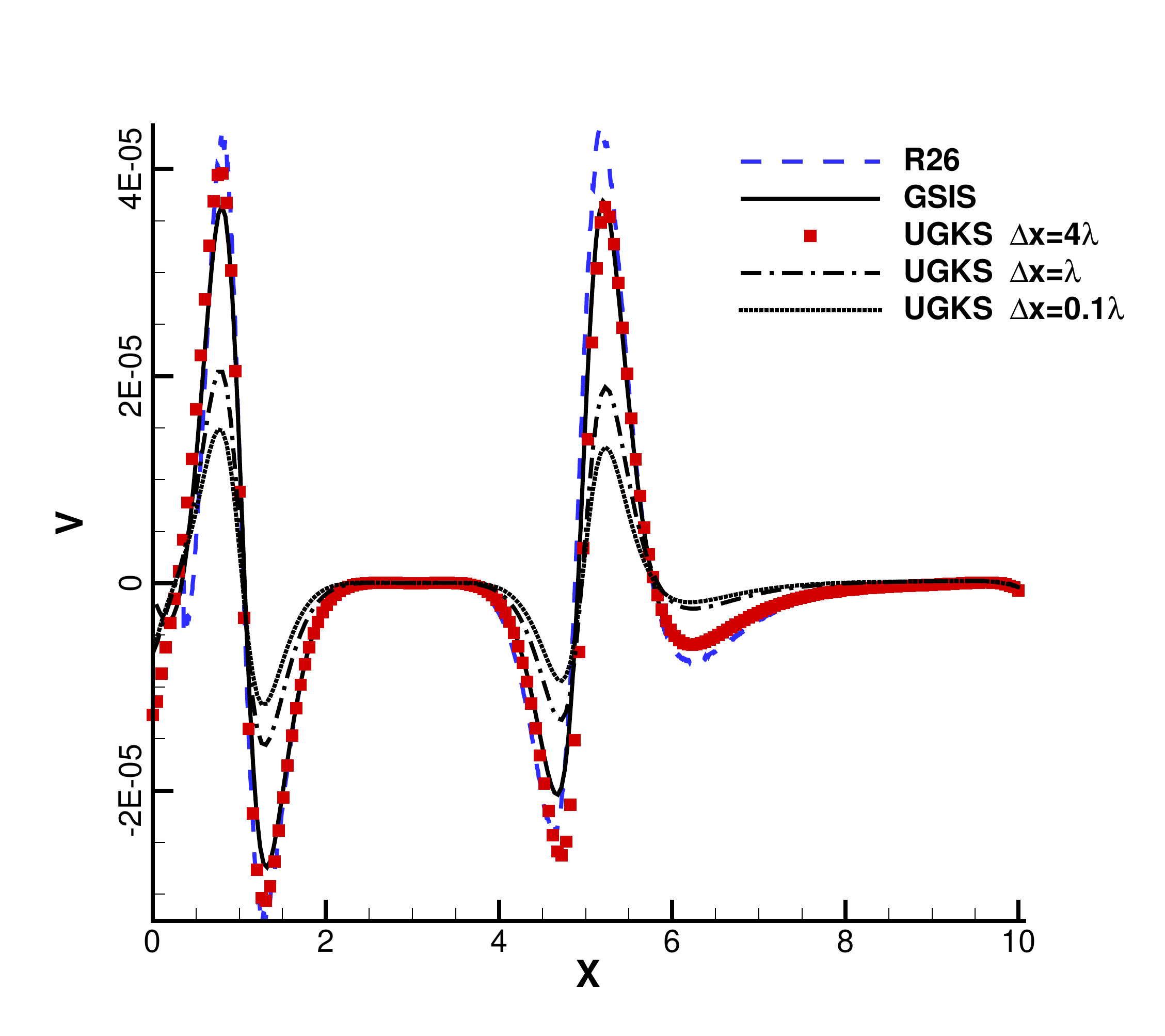}
\includegraphics[width=0.48\textwidth]{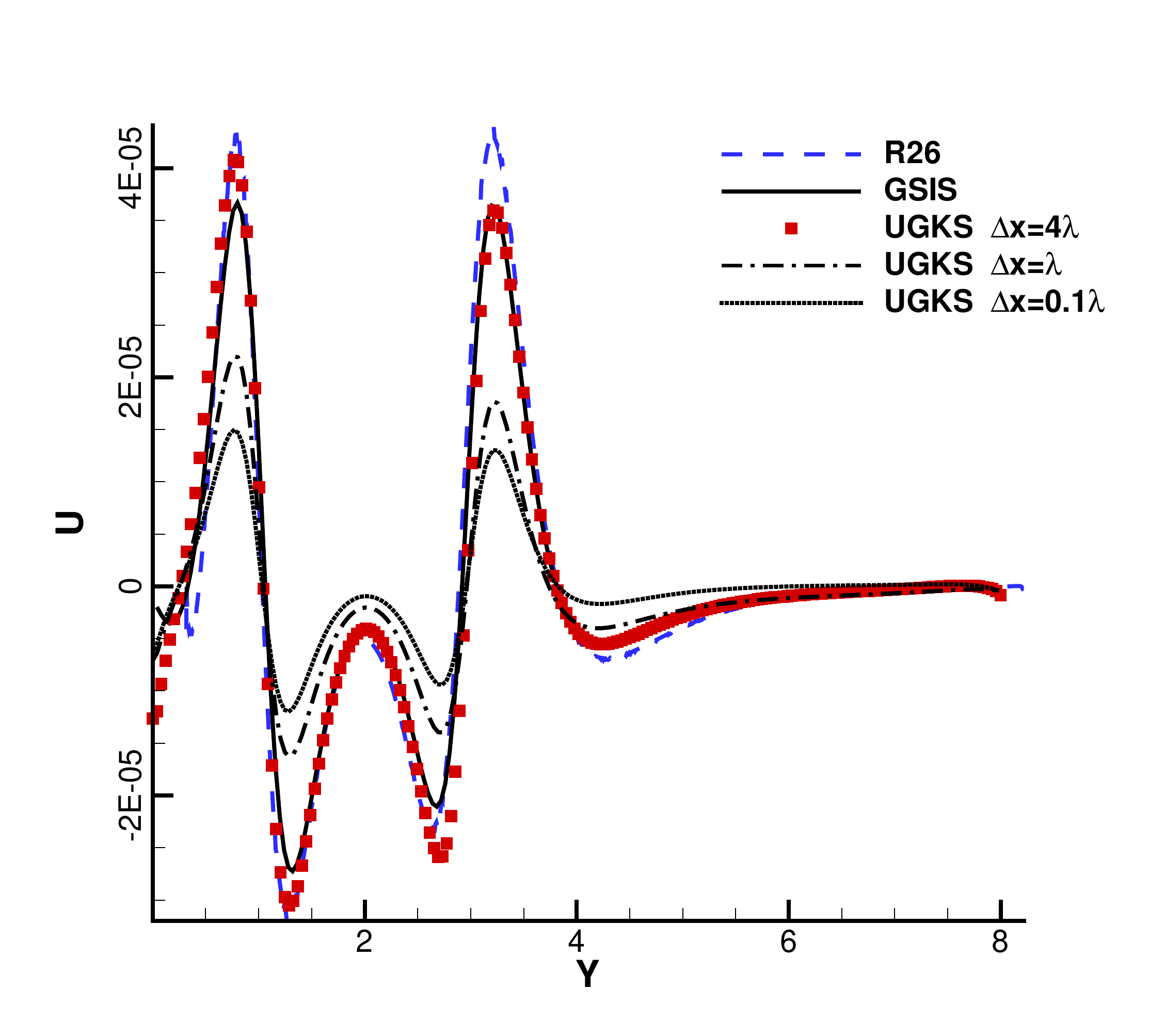}
\caption{Left figure shows the y-direction velocity profile along $y=0.5$, and right figure shows the x-direction velocity profile along $x=0.5$. The UGKS result with $\Delta x=4\lambda$ is shown in symbol; the UGKS result with $\Delta x=\lambda$ is shown in dash-dotted line; the UGKS result with $\Delta x=0.1\lambda$ is shown in dotted line; the R26 solution is shown in dashed line; and the GSIS solution is shown in solid line.}
\label{microbeam3}
\end{figure}

\begin{figure}
\centering
\includegraphics[width=0.48\textwidth]{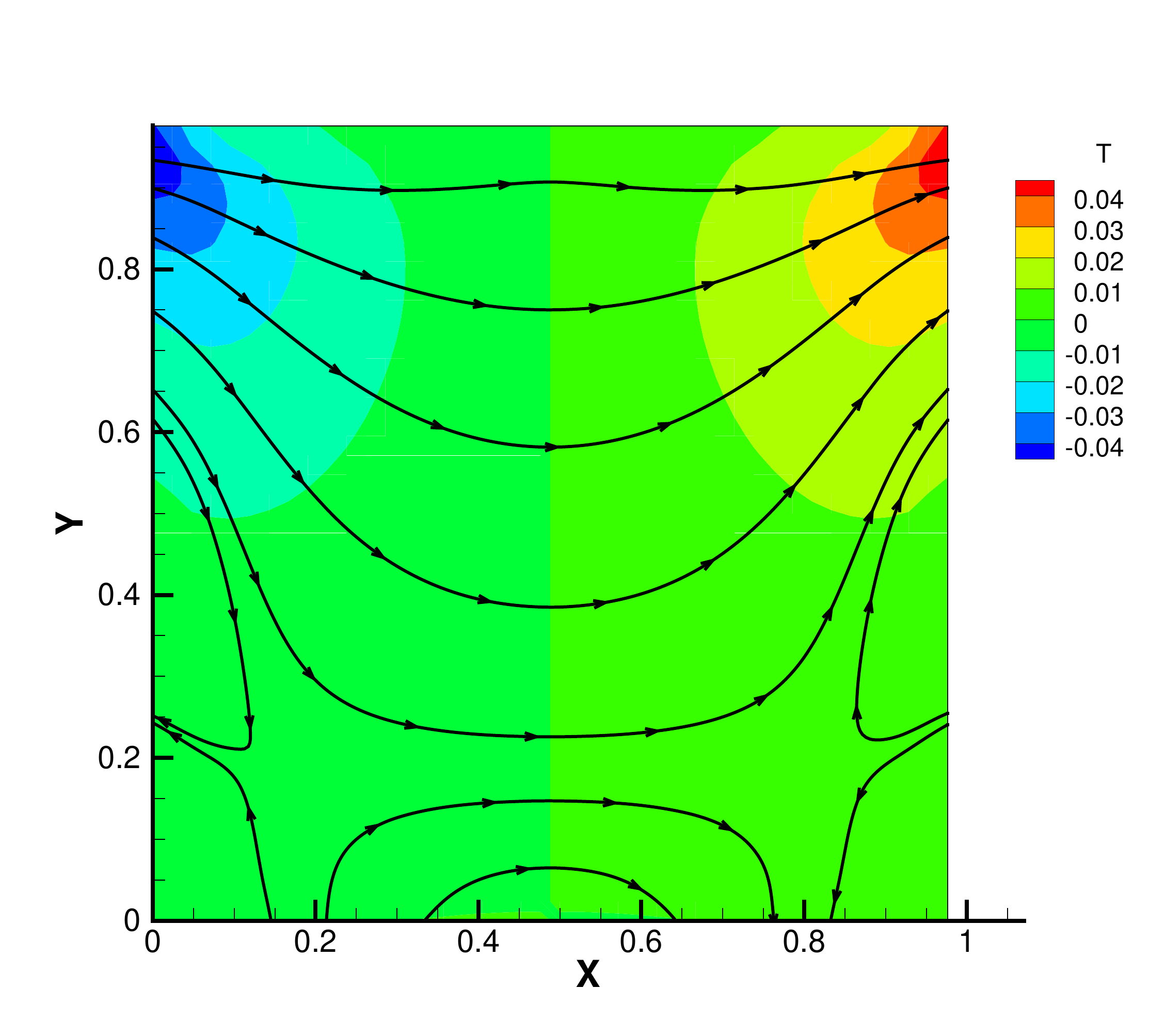}
\includegraphics[width=0.48\textwidth]{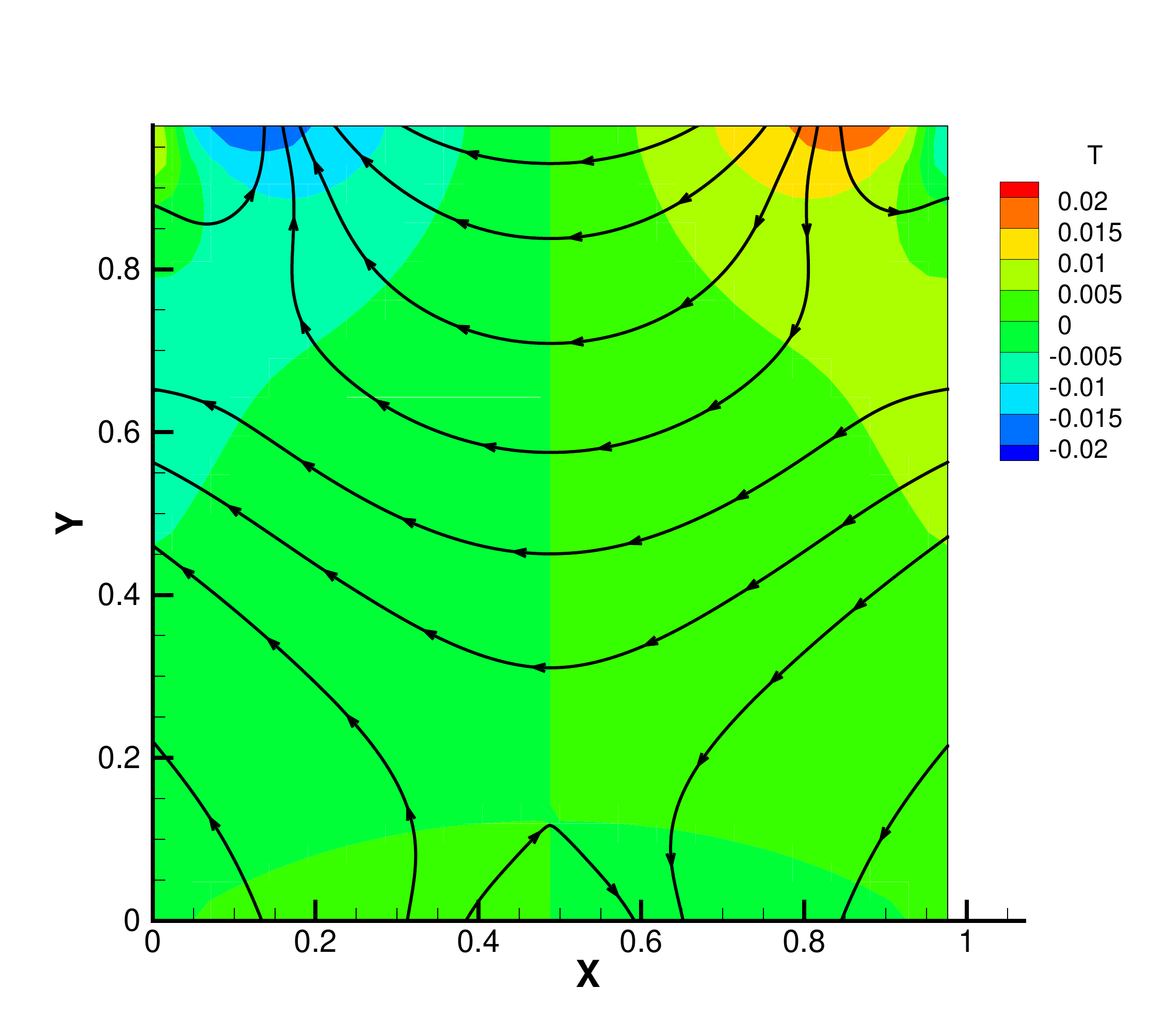}
\caption{The heat flux and temperature contour of the lid-driven cavity flow with Knudsen number 0.1. Left figure shows the UGKS solution and right figure shows the Navier-Stokes solution.}
\label{cavity1}
\end{figure}

\begin{figure}
\centering
\includegraphics[width=0.48\textwidth]{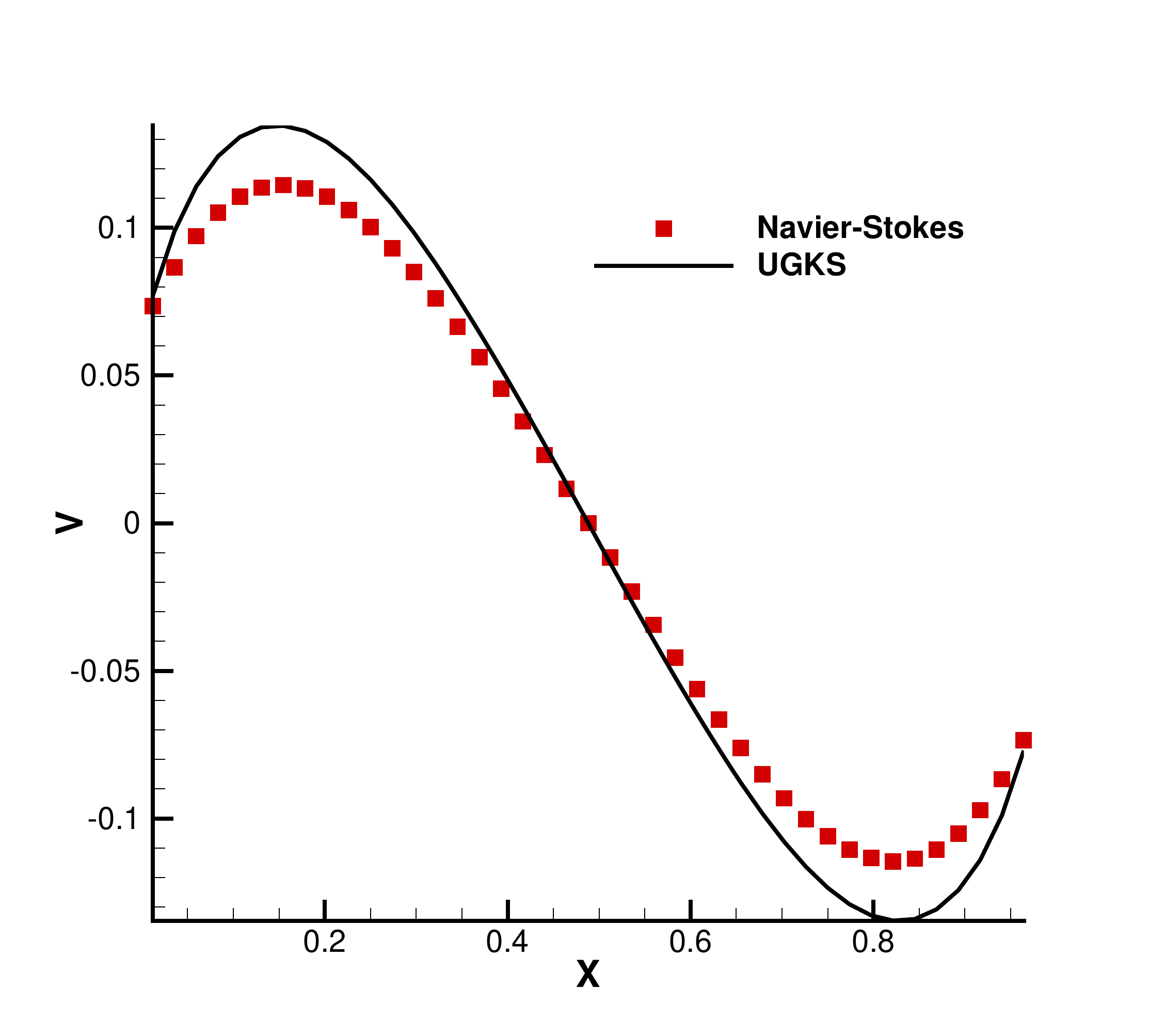}
\includegraphics[width=0.48\textwidth]{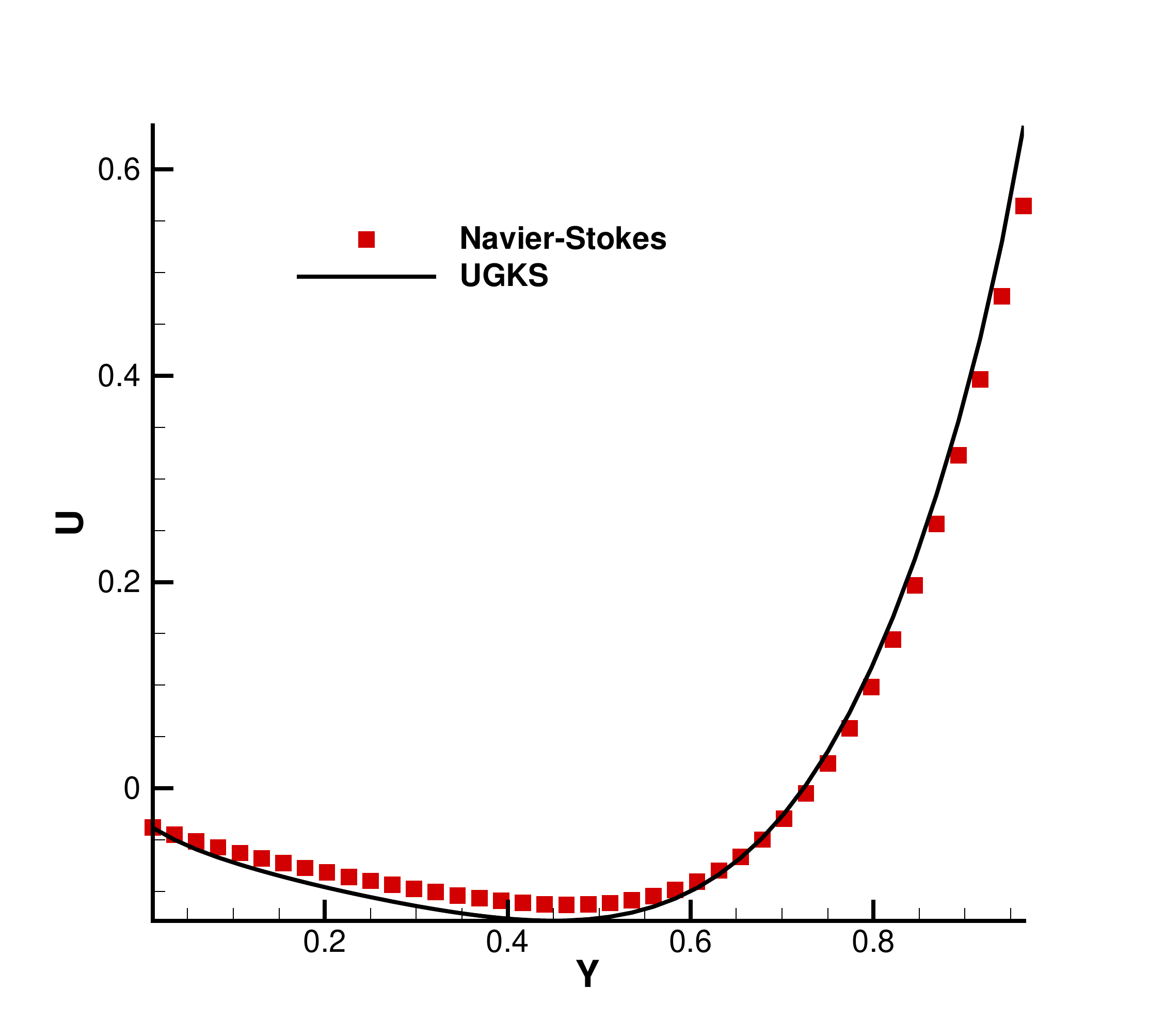}
\caption{The velocity distribution of the lid-driven cavity flow with Knudsen number 0.1. Left figure shows the y-directional velocity distribution along $y=0.5$, and right figure shows the x-directional velocity distribution along $x=0.5$.}
\label{cavity2}
\end{figure}

\begin{figure}
\centering
\includegraphics[width=0.48\textwidth]{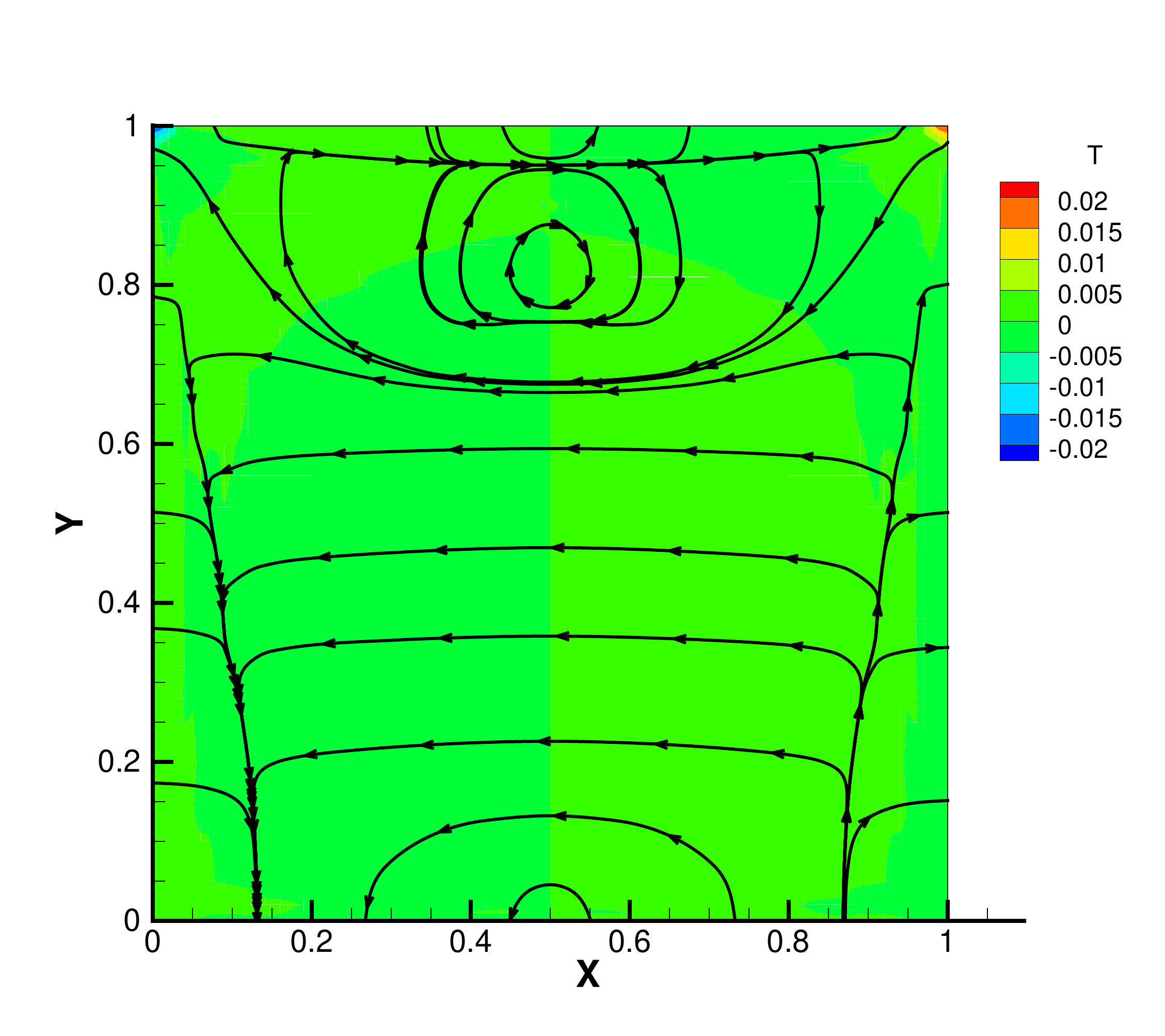}
\includegraphics[width=0.48\textwidth]{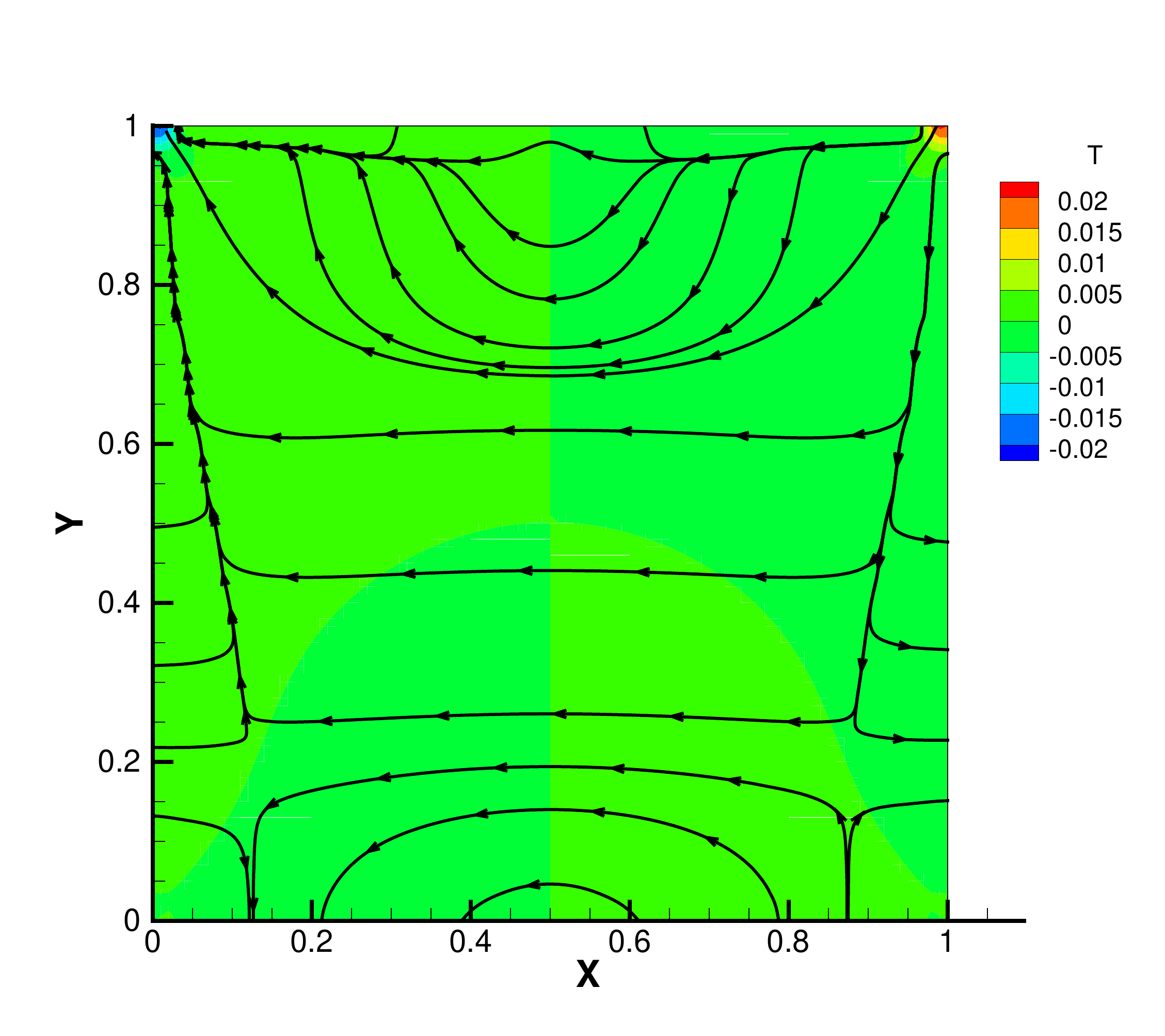}
\caption{The heat flux and temperature contour of the lid-driven cavity flow with Knudsen number $10^{-4}$. Left figure shows the UGKS solution and right figure shows the Navier-Stokes solution.}
\label{cavity3}
\end{figure}

\begin{figure}
\centering
\includegraphics[width=0.48\textwidth]{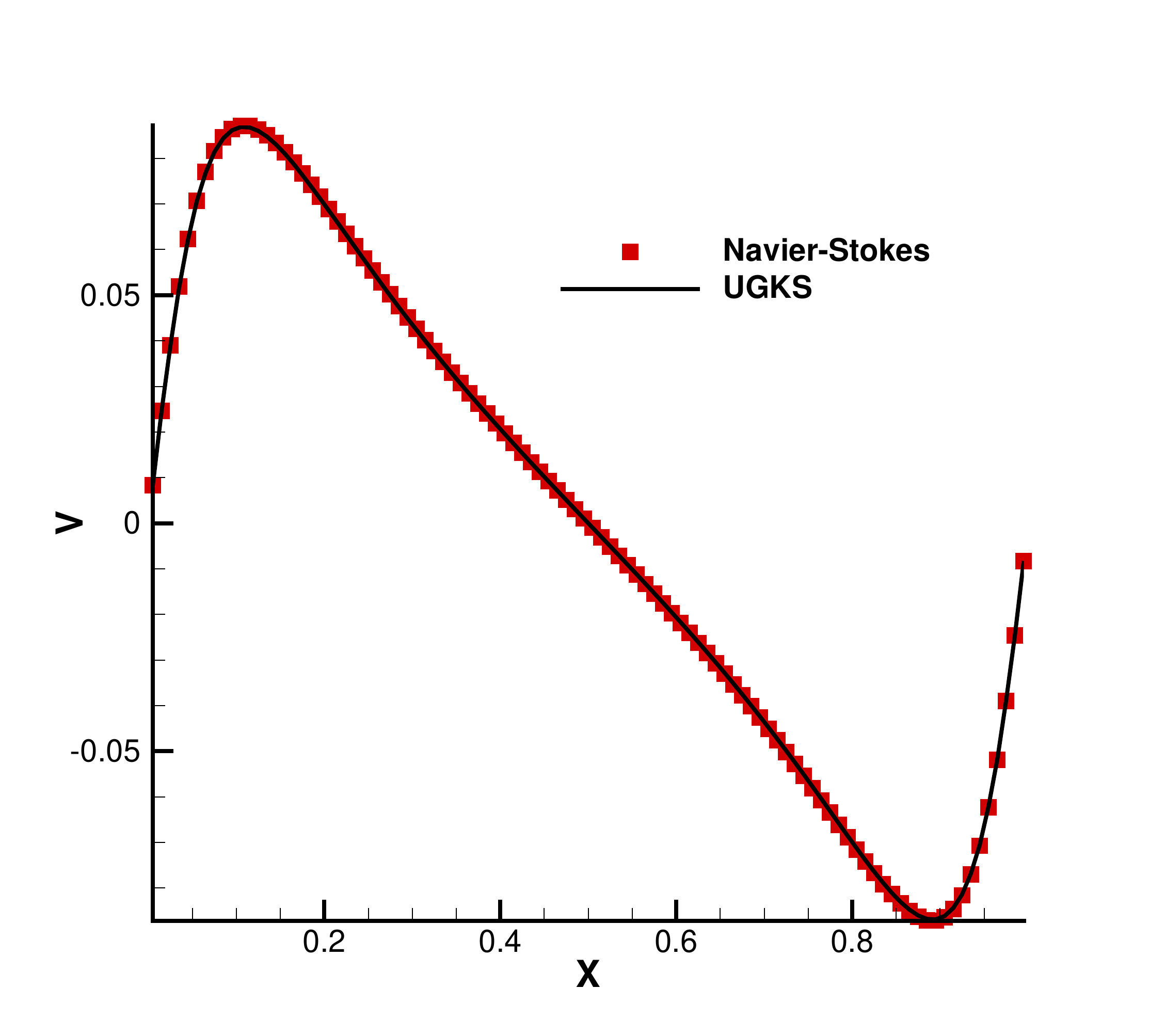}
\includegraphics[width=0.48\textwidth]{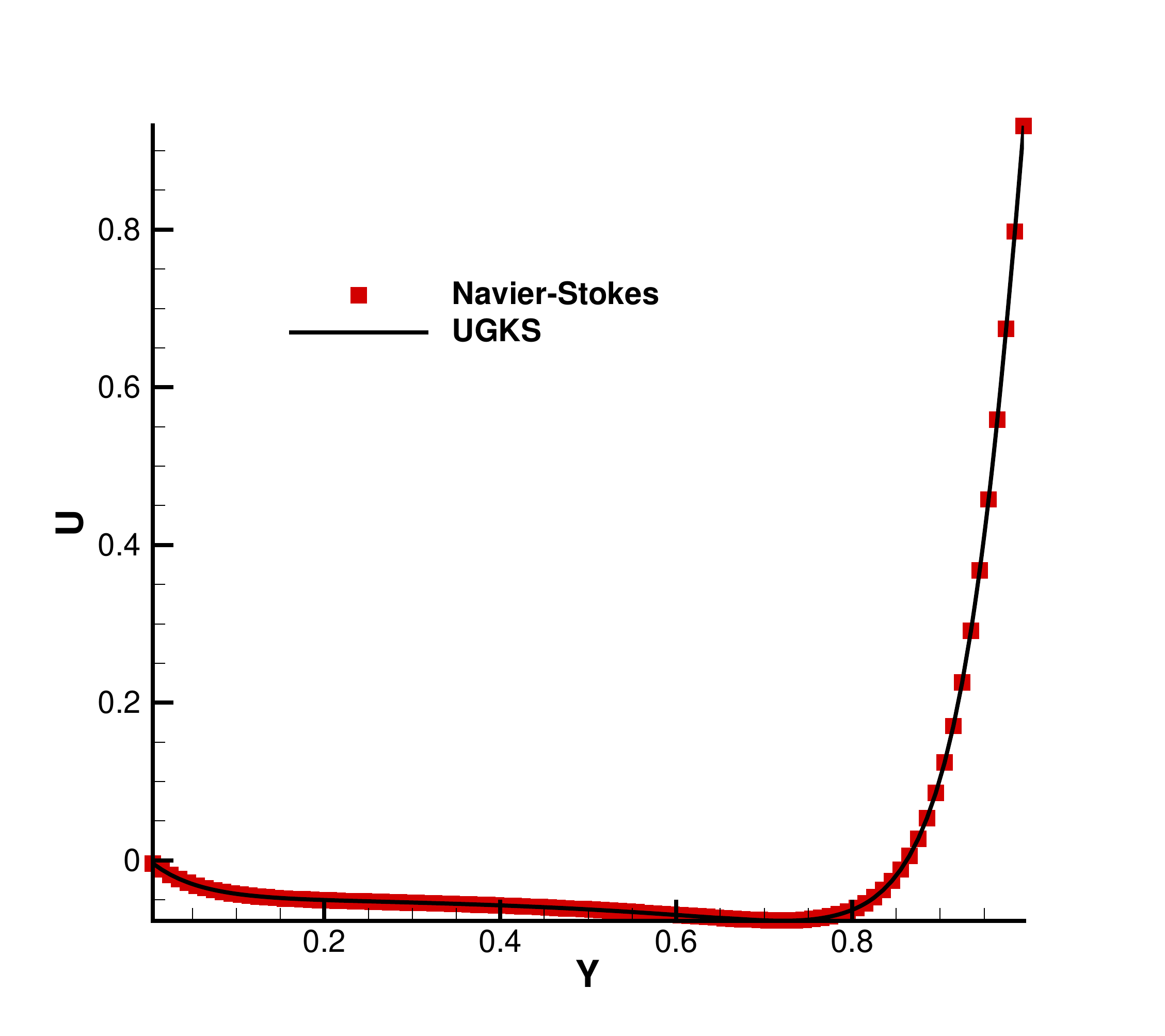}
\caption{The velocity distribution of the lid-driven cavity flow with Knudsen number $10^{-4}$. Left figure shows the y-directional velocity distribution along $y=0.5$, and right figure shows the x-directional velocity distribution along $x=0.5$.}
\label{cavity4}
\end{figure}

\section*{References}
\bibliography{linear}
\bibliographystyle{elsarticle-num}
\biboptions{numbers,sort&compress}
\end{document}